\providecommand{\U}[1]{\protect\rule{.1in}{.1in}}
\theoremstyle{plain}
\newtheorem{theorem}{Theorem}
\newtheorem{assumption}{Assumption}
\theoremstyle{definition}
\newenvironment{uprocedure}[1]{\inneruprocedure}{\endinneruprocedure}
\newcommand{\bge}{\begin{equation}}
\newcommand{\ene}{\end{equation}}
\def\monthname{\ifcase\month\or January\or February\or March\or April\or May\or June\or July\or August\or September\or October\or November\or December\fi}
\newcommand{\1}{\mbox{$\mathrm{1\hspace*{-2.5pt}l}$\,}}
\begin{document}

\thispagestyle{empty}

\begin{center}
{\Large Doubly Robust Estimation of Local Average Treatment Effects Using Inverse Probability Weighted Regression Adjustment\footnote{This article supersedes ``Doubly Robust IV Estimation of Local Average Treatment Effects'' by S. Derya Uysal.}}

\bigskip\bigskip\bigskip\bigskip

Tymon S\l oczy\'{n}ski

Brandeis University

\bigskip\bigskip

S.\ Derya Uysal

Ludwig Maximilian University of Munich

\bigskip\bigskip

Jeffrey M. Wooldridge

Michigan State University

\bigskip\bigskip\bigskip

This version: November 14, 2022
\end{center}

\begin{verbatim}
 
\end{verbatim}

\noindent
\textbf{Abstract:} We revisit the problem of estimating the local average
treatment effect (LATE) and the local average treatment effect on the
treated (LATT) when control variables are available, either to render the
instrumental variable (IV) suitably exogenous or to improve precision.
Unlike previous approaches, our doubly robust (DR) estimation procedures use
quasi-likelihood methods weighted by the inverse of the IV propensity score -- so-called inverse probability weighted regression adjustment (IPWRA)
estimators. By properly choosing models for the propensity score and outcome
models, fitted values are ensured to be in the logical range determined by
the response variable, producing DR estimators of LATE and LATT with
appealing small sample properties. Inference is relatively straightforward
both analytically and using the nonparametric bootstrap. Our DR LATE and DR
LATT estimators work well in simulations. We also propose a DR version of
the Hausman test that can be used to assess the unconfoundedness assumption through a comparison of different estimates of the average treatment effect on the treated (ATT) under
one-sided noncompliance. Unlike the usual test that compares OLS and IV estimates, this procedure is robust to treatment effect heterogeneity. \\
\phantom{} \\
\phantom{} \\
\phantom{} \\
\phantom{} \\
\phantom{} \\
\phantom{} \\
\pagebreak

\defcitealias{IA1994}{IA (1994)}
\defcitealias{DHL2014b}{DHL (2014)}
\shortcites{BBMT2022, BCFVH2017, Chernozhukovetal2018, Finkelsteinetal2012, TAWBF2014}

\doublespacing
\setcounter{page}{2}

\section{Introduction}

\label{intro}

Instrumental variables estimation of causal effects has a long history in
applied econometrics. In introductory econometrics courses, the properties
of the instrumental variables (IV)\ estimator are often taught within the
framework of a linear model with a constant coefficient. When applied in a
treatment effects setting, the constant coefficient assumption is equivalent
to assuming a constant treatment effect in the population. In their
pioneering work, \cite{IA1994} [\citetalias{IA1994}] used a potential
outcomes framework to study the probability limit of the simple IV estimator
in the setting of a binary treatment and binary instrumental variable. \citetalias{IA1994} showed that, under reasonable assumptions, the IV estimator
consistently estimates a parameter now known as the \textit{local average
treatment effect (LATE)}, which is the average treatment effect over the
subpopulation of units that comply with their randomized eligibility.
\cite{AIR1996} explicitly embedded the LATE setup within the
setting of the Rubin Causal Model, showed how the IV estimand identifies a
causal parameter under certain assumptions, and discussed the consequences of
violations of those assumptions. \cite{Vytlacil2002} demonstrated that the LATE framework is equivalent to a nonparametric selection model with a weakly separable selection equation.

In many applications of IV, the instrument is not randomly assigned, in
which case the simple IV estimator -- also known as the \textit{Wald
estimator} -- is no longer consistent for LATE\@. In some cases, conditioning
on observed covariates, or controls, can render the IV as good as randomly
assigned within subpopulations defined by the covariate values. In effect,
the IV is assumed to satisfy an unconfoundedness assumption conditional on
observables. In most textbook treatments of instrumental variables that
include control variables $X$, these are added linearly and
then they act as their own instruments. In the treatment effects context,
it may seem appealing to include interactions of the binary treatment variable, $W$, with
the (suitably centered) covariates $X$\@. If $Z$ is the
instrument for $W$, a natural instrument for $W \cdot X$ is $Z \cdot X$\@. \citet[Procedure 21.2]{Wooldridge2010} describes an IV procedure that exploits the
binary nature of $W$ by using a binary response model for $p\left(
Z,X\right) \equiv \mathbb{P}\left( W=1|Z,X\right) $ and then using $\hat{p}%
\left( Z_{i},X_{i}\right) $ and $\hat{p}\left( Z_{i},X_{i}\right) \cdot
X_{i}$ as instruments in the linear equation that includes $W_{i}$ and $%
W_{i}\cdot X_{i}$.

Adding covariates to a linear equation and interacting them with the
treatment indicator seems like a natural way to account for nonrandom
assignment of the instrument while allowing for heterogeneous treatment effects. Unfortunately, no results imply that this procedure generally
uncovers the LATE\@. By contrast, \cite{Tan2006} and \cite{Frolich2007} independently
obtained a useful identification result for LATE when covariates are needed
in order to render the IVs ignorable. \cite{Frolich2007} used his identification
result to obtain consistent, asymptotically normal estimators of LATE\@. As a
practical matter, however, the need to estimate four conditional mean functions
nonparametrically makes \citeauthor{Frolich2007}'s estimator difficult to implement, even
with just a small number of covariates. Plus, issues of how to handle
discrete, continuous, and mixed control variables need to be addressed.

On the other hand, the
estimation approach proposed by \cite{Tan2006} is based on so-called \textit{%
augmented inverse probability weighting (AIPW)} estimators. AIPW is a standard class of doubly robust (DR) estimators, that is, estimators that remain consistent under misspecification of either of the two (sets of) parametric working models on which they are based. However, as discussed in \cite{KS2007}, AIPW estimators, as commonly applied, are often unstable in practice, as is standard \textit{inverse probability
weighting (IPW)}\@. One reason is that these estimators are often based on weights in the weighted averages that do not sum to unity; in other
words, the weights are not normalized.

Following \cite{Tan2006} and \cite{Frolich2007}, many other estimation approaches for LATE have
been proposed, some of which are doubly robust and some are not. For
example, \cite{DHL2014b} [\citetalias{DHL2014b}] studied the aforementioned IPW estimators, which are consistent when the instrument
propensity score is correctly specified but not otherwise. Admittedly, \citetalias{DHL2014b} suggested
nonparametric series estimators, which in theory resolves the issues of misspecification, but in practice their approach would be applied in a flexible parametric framework by most practitioners. Similar to \cite{Tan2006}, other DR estimators have also been based on the AIPW approach, and this includes both those proposed in \cite{ORR2015} and several estimators that employ high-dimensional selection, including those in \cite{BCFVH2017}, \cite{Chernozhukovetal2018}, and \cite{ST2022}, which often also rely on sample splitting to allow for high-dimensional covariates. In recent work, \cite{Heiler2022} discussed a DR extension of a particular balancing estimator of LATE while \cite{SS2022} combined ``kappa weighting'' \citep{Abadie2003} and high-dimensional selection to obtain DR estimators of LATE and related parameters.

Our primary purpose in this paper is to propose a new class of doubly robust estimators of LATE that are simple to implement and avoid the
shortcomings of nonparametric conditional mean estimation and AIPW methods.
In particular, using the identification result in \cite{Frolich2007} and building also on \cite{Wooldridge2007} and \cite{SW2018}, we
show how estimators that use the inverse of the instrument propensity score
to weight the objective functions for estimating the treatment propensity
score and the conditional mean of the response allow consistent estimation
of LATE\@. These estimators, now commonly labeled \textit{inverse probability
weighted regression adjustment (IPWRA)} estimators, have the same double
robustness property of AIPW estimators. An advantage of IPWRA estimators is
that one can choose functional forms so that the estimated conditional
probability and conditional mean functions are guaranteed to produce
predictions within the logical range of the outcomes. This feature of IPWRA
makes the resulting estimators of LATE have good finite sample properties.
Moreover, the estimators are easy to obtain and inference is relatively
straightforward. Using a similar approach, we also propose DR estimators for the
local average treatment effect on the treated (LATT)\@.

In Section \ref{identification} we provide the setting, define the LATE
parameter, and summarize the identification result in \cite{Frolich2007}. We also study identification of LATT, beginning with a result
due to \cite{FL2010} but modifying it to obtain a simple
representation of this parameter that leads naturally to DR estimation.

Section \ref{robust} shows how the IPWRA approach can be used to identify
the four expectations appearing in LATE when conditioning on covariates. We
modify existing arguments to account for the fact that the conditional means
we need to estimate for the outcome are not of the potential outcomes.
Nevertheless, the IPWRA approach still identifies the required unconditional
means. This section carries out a similar analysis for LATT where we are able to relax the
assumptions used to identify LATE\@.

Section \ref{inference} shows how to obtain standard errors for the DR LATE
and DR LATT estimators that account for the sampling error in all estimation
steps. Section \ref{testing} shows how to modify the Hausman-type test
proposed by \citetalias{DHL2014b} to allow for DR estimation. In the case with
one-sided noncompliance, if assignment is unconfounded then LATT is the same as the average
treatment effect on the treated (ATT)\@. Therefore, we can obtain two DR
estimators using IPWRA estimation schemes: one that uses an instrumental
variable and another that employs unconfoundedness conditional on
covariates. We show how to test the null hypothesis that the two estimators
consistently estimate the same parameter.

Section \ref{applications} revisits two empirical studies. First, we use the data in \cite{Abadie2003} to
produce LATE and LATT estimates for the effect of
participating in a 401(k) pension plan on net financial wealth. We also look at the causal effect of
participating in a 401(k) plan on participation in individual retirement
accounts (IRAs). In this case, we compare IV estimates of a linear
probability model with our DR estimates that recognize the binary nature of
the IRA\ participation decision. Using our proposed method, we find that 401(k) participation has a positive effect on net financial assets and the probability of IRA participation. This is despite the fact that the corresponding AIPW (for net financial assets) and IPW estimates (for both outcomes) are much smaller and imprecise. Even though both AIPW and IPWRA are doubly robust, they lead to different conclusions about the LATE on net financial assets.

Second, we also revisit \cite{Finkelsteinetal2012} and \cite{TAWBF2014}, and use the data from the Oregon Health Insurance Experiment to study the effects of Medicaid on emergency room visits. Like in previous work, our estimates are positive, which suggests, perhaps counterintuitively, that access to health insurance may increase the utilization of emergency rooms. Our novel empirical contribution is that LATT, the effect on the treated compliers, is larger than the usual LATE, at least along the extensive margin. This is because treatment effects appear to be more pronounced in larger households \citep[cf.][]{DL2022}, which are also more likely to be treated.

Section \ref{simulations} presents simulation evidence on the performance of several estimators of LATE, including IV, regression adjustment (RA), IPW, AIPW, and IPWRA\@. The performance of our proposed method, IPWRA, is very satisfying. It is never substantially more biased than the competing estimators while its precision is better than that of AIPW, which is the only alternative that shares the double robustness property of IPWRA\@. Finally, Section \ref{conclusion} concludes.

\section{Identification of LATE and LATT}

\label{identification}

The potential outcomes setting in this paper is the one pioneered by \citetalias{IA1994}. Eventually, we will assume access to a random sample from the
population, and so all assumptions can be stated in terms of random
variables representing the population of interest.

For a binary intervention, let $Y(0)$ be the potential outcome
in the control state and $Y(1)$ the potential outcome in the
treated state. The observed binary treatment indicator is $W$, where $W=1$
denotes treatment and $W=0$ denotes control. We have access to a binary
instrumental variable, $Z$. As in \citetalias{IA1994}, there are potential treatment
statuses based on the assignment of the instrument (which is often
eligibility): $W(1)$ is participation status when a unit is
made \textquotedblleft eligible\textquotedblright\ and $W(0)$
is participation status in the \textquotedblleft
ineligible\textquotedblright\ state. This framework allows for the
possibility that units do not comply with their assigned \textquotedblleft
eligibility.\textquotedblright\ For example, some workers, if selected to
participate in a job training program ($Z=1$), may choose not to participate
[$W(1)=0$].

The observed outcome $Y$ is a function of the observed treatment variable
and the potential outcomes corresponding to the treatment and control
status: 
\begin{equation*}
Y=WY(1) +(1-W)Y(0) .
\end{equation*}%
Further, the realized treatment status can be written in terms of the
instrument $Z$ and the potential treatment statuses: 
\begin{equation}
W=ZW(1) +(1-Z)W(0) .  \label{pottreat}
\end{equation}

\noindent
According to the relationship between the potential treatment status and the
binary instrument, the population can be divided into four subpopulations:
compliers, always-takers, never-takers, and defiers. From the observed dataset one cannot identify the group to which an individual belongs since only
the pair $(W,Z)$ is observed. For example, if $Z=1$ and $W=1$, the
individual is either a complier or an always-taker. Always-takers and never-takers do not change their treatment behavior when the assignment of the IV
changes. The only subpopulations that can be induced into changing $W$
through a variation in $Z$ are the defiers and compliers.

Generally, the treatment effect of interest can be defined either as the
impact of the treatment on the outcome for the defiers [$W(1)
<W(0) $] or for the compliers [$W(1) >W\left(
0\right) $]. Following the literature, we focus on average treatment effects
on compliers under the assumption that defiers do not exist. The local
average treatment effect (LATE) is the expected difference between the
potential outcomes for the subpopulation of compliers:%
\begin{equation}
\tau _{LATE}=\mathbb{E}[Y(1) -Y(0) |W(1)
>W(0) ].  \label{LATE}
\end{equation}%
Compliers are members of a hypothetically defined subpopulation and cannot
be identified from observed data without further assumptions.

As in much of the literature since \citetalias{IA1994} -- including
several papers discussed in the introduction -- we assume that we have
(pre-treatment) covariates, $X$, that render the instrumental variable
suitably exogenous when conditioned on. The support of $X$ is indicated by $%
\mathcal{X}$. With these covariates, identification of $\tau _{LATE}$ is
possible if certain assumptions are met.

The first assumption is that, conditional on $X$, the instrumental variable
has no direct effect on the potential outcomes; its effect can come only
through the treatment assignment. The formal statement requires indicating
two arguments in the potential outcomes, $Y ( w,z )$ for $w,z\in
\{0,1\}$.

\begin{assumption}[Exclusion Restriction]
\label{ass1}  For $w\in \{0,1\}$
and almost all $x\in \mathcal{X}$,
\begin{equation}
	\mathbb{P}\left[Y ( w,1 ) = Y ( w,0 ) \mid X=x \right]=1.\text{ }\square
	\label{exclusion1}
\end{equation}
\end{assumption}

\noindent Assumption \ref{ass1} justifies labeling the potential outcomes
using a single index that indicates actual treatment status because we
condition on $X$ in stating ignorability of the instruments. In what
follows, we use $Y ( w )$ as the potential outcome for treatment
status $w\in \left\{ 0,1\right\} $.

\begin{assumption}[Ignorability of Instrument]\label{ass2}
Conditional
on $X$, the potential outcomes are jointly independent of $Z$:
\begin{equation}
	\left[ Y(0) ,Y(1) ,W(0) ,W(1) \right] \perp Z \mid X.\text{ }\square  \label{ignor1}
\end{equation}%
\end{assumption}

\noindent Assumption \ref{ass2} requires that, conditional on observed confounders,
the instrument can be regarded as random.

\begin{assumption}[Monotonicity]\label{ass3}
\begin{equation*}
	\mathbb{P}[W(1) \geq W(0) ]=1.\text{ }\square
\end{equation*}%
\end{assumption}

\noindent This monotonicity assumption is standard in the literature: it says that
there are no defiers in the population (or that the group is so small it has
probability zero). It is equivalent to a conditional statement, namely, $%
\mathbb{P}[W(1) \geq W(0) |X=x]=1$ for almost all $%
x\in \mathcal{X}$. In other words, if Assumption \ref{ass3} holds, $\mathbb{P%
}[W(1) \geq W(0) |X=x]<1$ is possible only on a
subset of $x\in \mathcal{X}$ with measure zero. Formally, this claim is
equivalent to the proposition that for a random variable $R\geq 0$, $\mathbb{%
E}\left( R\right) =0$ if and only if $\mathbb{P}\left( R=0\right) =1$.

The next assumption requires the existence of compliers in the population.

\begin{assumption}[Existence of Compliers]\label{ass4}
\begin{equation*}
	\mathbb{P}[W(1) >W(0) ]>0.\text{ }\square
\end{equation*}%
\end{assumption}

\noindent
When we partition the population on the basis of the covariates $X$,
Assumption \ref{ass4} implies that, for some subset $\mathcal{C\subset X}$
with $\mathbb{P}\left( \mathcal{C}\right) >0$, $\mathbb{P}[W(1)
>W(0) |X=x]>0$ if $x\in \mathcal{C}$. This follows by iterated
expectations: If $\1\left[ W(1) >W(0) \right] $ has
positive expectation then its expectation conditional on $X$ must be
positive with nonzero probability. The subset $\mathcal{C}$ defines the
subpopulation of compliers based on the values of $X$.

The requisite overlap assumption is stated in terms of the propensity score
involving the instrumental variable, sometimes referred to as the \textit{%
instrument propensity score}.

\begin{assumption}[Overlap for LATE]\label{ass5}
For almost all $x\in 
\mathcal{X}$,
\begin{equation*}
	0<\mathbb{P}\left( Z=1|X=x\right) <1.\text{ }\square
\end{equation*}
\end{assumption}

\noindent
\citetalias{IA1994} show that if Assumptions \ref{ass1}--\ref{ass5}
hold without conditioning on $X$, then $\tau _{LATE}$ is identified as
\begin{equation}
\tau _{LATE}=\frac{\mathbb{E}[Y|Z=1]-\mathbb{E}[Y|Z=0]}{\mathbb{E}[W|Z=1]-%
\mathbb{E}[W|Z=0]}.  \label{lateid}
\end{equation}%
In that case, given a random sample from the population, $\tau _{LATE}$ can
be consistently estimated by replacing the expectations in (\ref{lateid}) with
the corresponding sample averages. This simple estimator is the well-known Wald estimator;
it is also obtained by estimating the simple linear equation $Y=\alpha
+\beta W+U$ by instrumental variables using instruments $\left( 1,Z\right) $%
. In other words, $\hat{\beta}_{IV}=\hat{\tau}_{LATE}$. In some cases, one
cannot participate ($W=1$) unless assigned to the treatment ($Z=1$), in
which case the second term in the denominator of $\tau _{LATE}$ is zero: $%
\mathbb{E}[W|Z=0]=\mathbb{P}[W=1|Z=0]=0$. This is the case in the
application in \cite{Abadie2003}  (also used in several subsequent studies), where
an employee cannot participate in an employer-sponsored pension plan unless
the employer offers such a plan. In other applications, $Z=1$ implies $W=1$
(no never-takers), in which case the first term in the denominator of $\tau
_{LATE}$ is unity. This situation arises in \cite{AE1998}  when,
in a population of women with at least two children, the \textquotedblleft
treatment\textquotedblright\ is having more than two children and the binary
instrument indicates whether the second birth was a multiple birth.

In many cases the instrumental variable candidate, $Z$, is not truly
randomized, but we might be willing to assume it is as good as randomized
conditional on $X$\@. Then, Assumptions \ref{ass1}--\ref{ass5} imply that we
can identify $\tau _{LATE}$. The following theorem is due to \citet[Theorem 1]{Frolich2007}. \cite{FL2010} relax the assumptions
somewhat but not in a way that makes the theorem clearly more applicable.

\begin{theorem}[Identification of LATE]\label{theorem1}
 Under
Assumptions \ref{ass1}--\ref{ass5},
\begin{equation}
	\tau _{LATE}=\frac{\mathbb{E}\left[ {\mathbb{E}}\left( {Y|X,Z=1}\right) -{%
			\mathbb{E}}\left( {Y|X,Z=0}\right) \right] }{\mathbb{E}\left[ {\mathbb{E}}%
		\left( {W|X,Z=1}\right) -{\mathbb{E}}\left( {W|X,Z=0}\right) \right] }=\frac{%
		\mathbb{E}[\mu _{1}(X)-\mu _{0}(X)]}{\mathbb{E}[\rho _{1}(X)-\rho _{0}(X)]},
	\label{latecondid}
\end{equation}%
where 
\begin{eqnarray}
	\mu _{0}(X) &\equiv &\mathbb{E}\left( Y|X,Z=0\right)  \label{mu_0} \\
	\mu _{1}(X) &\equiv &\mathbb{E}\left( Y|X,Z=1\right)  \label{mu_1}
\end{eqnarray}%
and 
\begin{eqnarray}
	\mathbb{E}\left( W|X,Z=0\right) &=&\mathbb{E}[W(0) |X]=\rho
	_{0}(X)  \label{rho_0} \\
	\mathbb{E}\left( W|X,Z=1\right) &=&\mathbb{E}[W(1) |X]=\rho
	_{1}(X).\text{ }\square  \label{rho_1}
\end{eqnarray}
\end{theorem}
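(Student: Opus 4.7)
The plan is to prove the conditional-on-$X$ analog of the classical \citetalias{IA1994} Wald formula and then apply iterated expectations to pass from the conditional identity to the ratio in \eqref{latecondid}. Concretely, I would target the two pointwise identities that for almost every $x\in\mathcal{X}$,
\[
\mu_1(x)-\mu_0(x)=\mathbb{E}\left[(Y(1)-Y(0))\cdot\1[W(1)>W(0)]\mid X=x\right],
\]
together with $\rho_1(x)-\rho_0(x)=\mathbb{P}[W(1)>W(0)\mid X=x]$. Averaging each over the marginal distribution of $X$ and taking the ratio would then produce $\tau_{LATE}$ straight from the definition \eqref{LATE}.

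First I would rewrite the observed outcome when conditioning on $Z=z$. Using \eqref{pottreat}, $W=W(z)$ on the event $\{Z=z\}$, and the exclusion restriction (Assumption \ref{ass1}) lets me write $Y=W(z)Y(1)+(1-W(z))Y(0)$ on that event. Ignorability (Assumption \ref{ass2}) then removes the conditioning on $Z$ inside the conditional expectation, giving $\mu_z(x)=\mathbb{E}[W(z)Y(1)+(1-W(z))Y(0)\mid X=x]$ for $z\in\{0,1\}$. Subtracting the two cases yields
\[
\mu_1(x)-\mu_0(x)=\mathbb{E}\left[(W(1)-W(0))(Y(1)-Y(0))\mid X=x\right].
\]
Monotonicity (Assumption \ref{ass3}) forces $W(1)-W(0)\in\{0,1\}$ almost surely, so $W(1)-W(0)=\1[W(1)>W(0)]$, which is exactly the first displayed identity. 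The analogous, simpler calculation $\mathbb{E}[W\mid X=x,Z=z]=\mathbb{E}[W(z)\mid X=x]$ (again by ignorability) gives the identity for $\rho_1(x)-\rho_0(x)$.

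Next I would take iterated expectations on both equalities, obtaining $\mathbb{E}[\mu_1(X)-\mu_0(X)]=\mathbb{E}\{(Y(1)-Y(0))\,\1[W(1)>W(0)]\}$ and $\mathbb{E}[\rho_1(X)-\rho_0(X)]=\mathbb{P}[W(1)>W(0)]$. Assumption \ref{ass4} ensures the denominator is strictly positive, so the ratio is well defined, and by the definition of conditional expectation given the event $\{W(1)>W(0)\}$ the ratio coincides with $\mathbb{E}[Y(1)-Y(0)\mid W(1)>W(0)]=\tau_{LATE}$. Assumption \ref{ass5} enters to guarantee that $\mu_z(x)$ and $\rho_z(x)$ are well defined on a set of full $P_X$-measure, since both events $\{Z=0,X=x\}$ and $\{Z=1,X=x\}$ must have positive probability for the conditioning to make sense.

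The main obstacle is the first step, namely translating $\mathbb{E}(Y\mid X,Z=z)$, an expectation of the observed outcome, into an expectation of a function of \emph{potential} outcomes and potential treatment statuses given only $X$. This is where the exclusion restriction, the definition \eqref{pottreat} of realized treatment in terms of $Z$ and $(W(0),W(1))$, and ignorability all have to be combined in the right order; once this translation is in place, monotonicity collapses $W(1)-W(0)$ to an indicator and iterated expectations finish the argument essentially mechanically.
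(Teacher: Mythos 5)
Your proof is correct. The paper does not actually prove Theorem \ref{theorem1} --- it attributes the result to \citet[Theorem 1]{Frolich2007} and only supplies a proof for Theorem \ref{theorem2} --- and your argument is precisely the standard conditional-on-$X$ Wald argument underlying that result: exclusion plus $W=W(z)$ on the event $\{Z=z\}$ to express $Y$ in terms of single-indexed potential outcomes, joint ignorability of $[Y(0),Y(1),W(0),W(1)]$ to drop $Z$ from the conditioning, monotonicity to collapse $W(1)-W(0)$ to the complier indicator, and iterated expectations together with Assumption \ref{ass4} for strict positivity of the denominator. One cosmetic quibble: with continuous $X$ the events $\{Z=z,X=x\}$ have probability zero, so Assumption \ref{ass5} should be invoked as $0<\mathbb{P}(Z=1\mid X=x)<1$ for almost all $x$ (which is how the paper states it), guaranteeing that $\mu_{z}(x)$ and $\rho_{z}(x)$ are identified from the observed data on a set of full $P_X$-measure; this is clearly what you intend and does not affect the argument.
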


\noindent
As discussed in \cite{Frolich2007}, the result in equation (\ref{latecondid})
suggests that one can estimate each of the four conditional mean functions, $%
\mathbb{E}\left( Y|X,Z=0\right) $, $\mathbb{E}\left( Y|X,Z=1\right) $, $%
\mathbb{E}\left( W|X,Z=0\right) $, and $\mathbb{E}\left( W|X,Z=1\right) $,
using nonparametric methods, and then average the estimates across $i$ to
estimate the unconditional means. Especially when the dimension of $X$ is
large, nonparametric estimation is not attractive, and inference is also
complicated. One of the estimators considered by \cite{Tan2006} estimates the
numerator and denominator in (\ref{latecondid}) using augmented inverse
probability weighting (AIPW) estimators. AIPW estimators are popular in the
case of unconfounded assignment but simulations show they are not always
well behaved even in somewhat large samples \citep[e.g.,][]{KS2007}. \cite{SW2018}  provide a recent overview of the debate on the merits of AIPW
approaches. One issue that apparently has not been noted is that commonly
used AIPW estimators -- like those in \cite{Tan2006} -- implicitly use weights
in the weighted averages that do not sum to unity; in other words, the
weights are not normalized. Because many of the estimators summarized in the
introduction have an AIPW flavor, they suffer from the same problem --
whether they are based on parametric approaches or machine learning
algorithms. In the next section we show how to use a class of DR estimators based on weighted quasi-maximum likelihood estimation (QMLE) to estimate
the four means that appear in (\ref{latecondid}). This possibility was already indicated by \cite{SW2018} but without any of the details that we consider here.

There is also some interest in estimating the local average treatment effect
on the treated (LATT), formally defined as
\begin{equation}
\tau _{LATT}=\mathbb{E}[Y(1) -Y(0) \mid W(1)
>W(0) ,W=1].  \label{LATT}
\end{equation}%
\cite{FL2010} study identification of this parameter and
show that, under the assumptions in Theorem 1,
\begin{equation}
\tau _{LATT}=\frac{\mathbb{E}\left\{ [\mu _{1}(X)-\mu _{0}(X)] \eta (X) \right\} }{\mathbb{E}\left\{ [\rho _{1}(X)-\rho _{0}(X)] \eta (X) \right\} },  \label{lattcondid}
\end{equation}%
where $\eta (x)$ is the instrument propensity score:
\begin{equation}
\eta (x) \equiv \mathbb{P}\left( Z=1|X=x\right) \text{, }x\in 
\mathcal{X}.  \label{iv_ps}
\end{equation}%
For our purposes, we use a different representation of $\tau _{LATT}$. We
state this theorem under the exclusion and ignorability  assumptions in
Theorem 1 even though we could relax some of the assumptions. It is useful
to explicitly relax the overlap assumption.

\begin{assumption}[Overlap for LATT]\label{ass6}
For almost all $x \in \mathcal{X}$,
\begin{equation*}
	\mathbb{P}\left( Z=1|X=x\right) <1.\text{ }\square
\end{equation*}%
\end{assumption}

\noindent
The overlap assumption for LATT means that there can be subsets of the
population, based on the values of the control variables $X$, where units
are not eligible\ for the treatment.

\begin{theorem}[Identification of LATT]\label{theorem2}
Under
Assumptions \ref{ass1}--\ref{ass4} and Assumption \ref{ass6},
\begin{equation}
	\tau _{LATT}=\frac{\mathbb{E}\left( Y|Z=1\right) -\mathbb{E}[\mu _{0}(X)|Z=1]%
	}{\mathbb{E}\left( W|Z=1\right) -\mathbb{E}[\rho _{0}(X)|Z=1]}.\text{ }%
	\square  \label{lattcondid2}
\end{equation}
\end{theorem}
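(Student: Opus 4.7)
The plan is to derive (\ref{lattcondid2}) from the Fr\"olich--Lechner representation (\ref{lattcondid}) by iterated expectations, showing that the numerator and denominator of (\ref{lattcondid}) each equal $\mathbb{P}(Z=1)$ times the corresponding numerator and denominator of (\ref{lattcondid2}), so that the common factor cancels in the ratio. The only tool required is the workhorse identity
\begin{equation*}
\mathbb{E}[g(X)\,\eta(X)] \;=\; \mathbb{E}[g(X)\,\mathbb{E}(Z\mid X)] \;=\; \mathbb{E}[g(X)\,Z] \;=\; \mathbb{P}(Z=1)\,\mathbb{E}[g(X)\mid Z=1],
\end{equation*}
valid for any integrable $g$, where the middle equality is just iterated expectations pulling the $\sigma(X)$-measurable factor $g(X)$ outside.

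First I would dispatch the denominator. Splitting $\mathbb{E}\{[\rho_1(X)-\rho_0(X)]\eta(X)\}$ into two pieces, the first equals $\mathbb{E}[\mathbb{E}(W\mid X,Z=1)\,\mathbb{P}(Z=1\mid X)] = \mathbb{E}[\mathbb{E}(WZ\mid X)] = \mathbb{E}(WZ) = \mathbb{P}(Z=1)\,\mathbb{E}(W\mid Z=1)$, while the second equals $\mathbb{P}(Z=1)\,\mathbb{E}[\rho_0(X)\mid Z=1]$ by the workhorse identity applied with $g=\rho_0$. An identical calculation with $Y$ in place of $W$ and $\mu_j$ in place of $\rho_j$ yields $\mathbb{E}\{[\mu_1(X)-\mu_0(X)]\eta(X)\} = \mathbb{P}(Z=1)\{\mathbb{E}(Y\mid Z=1)-\mathbb{E}[\mu_0(X)\mid Z=1]\}$. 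Forming the ratio and cancelling the common $\mathbb{P}(Z=1)$ factor delivers (\ref{lattcondid2}).

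The main subtlety, and the place I would have to be careful, is that Theorem~\ref{theorem2} replaces Assumption~\ref{ass5} with the weaker Assumption~\ref{ass6}, allowing $\eta(X)=0$ on a set of positive probability, where $\mu_1(X)$ and $\rho_1(X)$ are strictly speaking undefined. Since the route above passes through (\ref{lattcondid}), I would need to verify that representation remains meaningful in this weaker setting. The fix is routine: $\mu_1$ and $\rho_1$ only ever appear multiplied by $\eta(X)$, so I would adopt the conventions $\mu_1(X)\eta(X):=\mathbb{E}(YZ\mid X)$ and $\rho_1(X)\eta(X):=\mathbb{E}(WZ\mid X)$, both of which are well-defined everywhere and automatically vanish on $\{\eta(X)=0\}$. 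With this convention the underlying potential-outcomes derivation of (\ref{lattcondid}) still goes through, because $(Y(1)-Y(0))(W(1)-W(0))Z$ and $(W(1)-W(0))Z$ both vanish whenever $Z=0$, and $Z=0$ almost surely on $\{\eta(X)=0\}$. The objects $\mu_0$ and $\rho_0$ that actually appear in the statement of Theorem~\ref{theorem2} require only $\mathbb{P}(Z=0\mid X)>0$, which is exactly Assumption~\ref{ass6}.
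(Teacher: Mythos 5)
Your proof is correct and follows essentially the same route as the paper's: both start from the Fr\"olich--Lechner representation (\ref{lattcondid}), use iterated expectations to replace $\eta(X)$ by $Z$, divide through by $\mathbb{P}(Z=1)>0$, and collapse the $Z=1$ terms so that $\mathbb{E}[\mu_1(X)\mid Z=1]$ and $\mathbb{E}[\rho_1(X)\mid Z=1]$ become $\mathbb{E}(Y\mid Z=1)$ and $\mathbb{E}(W\mid Z=1)$. Your explicit treatment of the weakened overlap Assumption \ref{ass6} --- adopting the conventions $\mu_1(X)\eta(X):=\mathbb{E}(YZ\mid X)$ and $\rho_1(X)\eta(X):=\mathbb{E}(WZ\mid X)$ so that (\ref{lattcondid}) remains meaningful where $\eta(X)=0$ --- is a point of care the paper leaves implicit, but it refines rather than changes the argument.
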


\begin{proof}

By iterated expectations and (\ref{lattcondid}),
\begin{equation*}
\tau _{LATT}=\frac{\mathbb{E}\left\{ [\mu _{1}(X)-\mu _{0}(X)]Z\right\} }{%
\mathbb{E}\left\{ [\rho _{1}(X)-\rho _{0}(X)]Z\right\} }
\end{equation*}%
and so, dividing the numerator and the denominator by $\mathbb{P}\left( Z=1\right)
>0 $,
\begin{equation*}
\tau _{LATT}=\frac{\mathbb{E}\left\{ [\mu _{1}(X)-\mu _{0}(X)]|Z=1\right\} }{%
\mathbb{E}\left\{ [\rho _{1}(X)-\rho _{0}(X)]|Z=1\right\} }.
\end{equation*}%
By definition of $\mu_{0} (x)$ and $\mu_{1} (x)$,
we can write
\begin{equation*}
Y=(1-Z)\mu _{0}(X)+Z\mu _{1}(X)+U,\text{ \ }\mathbb{E}\left( U|X,Z\right) =0.
\end{equation*}%
It follows that
\begin{equation*}
\mathbb{E}\left( Y|Z=1\right) =\mathbb{E}\left[ \mu _{1}(X)|Z=1\right].
\end{equation*}%
Also, $W=\left( 1-Z\right) W(0) +ZW(1) $ and so, by
ignorability,
\begin{equation*}
\mathbb{E}\left( W|X,Z=1\right) =\mathbb{E}\left[ W(1) |X,Z=1%
\right] =\mathbb{E}\left[ W(1) |X\right] =\rho _{1}(X).
\end{equation*}%
Iterated expectations implies $\mathbb{E}\left( W | Z=1\right) =\mathbb{E}%
\left[ \rho _{1}(X)|Z=1\right] $. Therefore, we can write $\tau _{LATT}$ as
in (\ref{lattcondid2}). The overlap assumption ensures that $\mathbb{E}[\mu
_{0}(X)|Z=1]$ is identified and ignorability and overlap ensure $\mathbb{E}%
[\rho _{0}(X)|Z=1]$ is identified. \end{proof}

\noindent
As we show in the next section, the representations in (\ref{latecondid}) and %
(\ref{lattcondid2}) permit doubly robust estimation of $\tau _{LATE}$ and $%
\tau _{LATT}$ using IPWRA estimators, providing a unification that makes the
estimation approaches transparent and simple.

\section{Doubly Robust Estimation of LATE and LATT}

\label{robust}

We now turn to estimation of $\tau _{LATE}$ and $\tau _{LATT}$ using a
particular class of doubly robust (DR) estimators, starting with the former.
The approach we take allows us to tailor the analysis based on the nature of
the observed outcome, $Y$, by choosing a suitable conditional mean function.
In particular, using the identification results in \cite{SW2018}, we extend \cite{Wooldridge2007}'s approach of combining inverse
probability weighting (IPW) and regression adjustment (RA) using a
particular quasi-maximum likelihood estimator. These estimators are commonly
referred to as IPWRA estimators.

\subsection{Estimation of LATE}

The identification result in equation (\ref{latecondid}) shows that, in
order to consistently estimate $\tau _{LATE}$, we need to consistently
estimate the following four quantities: 
\begin{eqnarray}
\theta _{1} &=&\mathbb{E}[\mu _{1}(X)]\text{, \ }\theta _{0}=\mathbb{E}[\mu
_{0}(X)],  \label{ucmean_mu} \\
\pi _{1} &=&\mathbb{E}[\rho _{1}(X)]\text{, \ }\pi _{0}=\mathbb{E}[\rho
_{0}(X)].  \label{ucmean_rho}
\end{eqnarray}%
Because of the representation of $W$ in equation (\ref{pottreat}), we can
immediately apply the DR results on IPWRA estimation from \cite{Wooldridge2007}
and \cite{SW2018}. The approach first requires
estimating a binary response model for the instrument propensity score
defined in (\ref{iv_ps}). By the overlap assumption (Assumption \ref{ass5}), $0<\eta
(x)<1 $ for all $x\in \mathcal{X}$. The proposal here is to use a standard
parametric model for $\eta (x)$, as is common in the literature when
estimating a propensity score function. Probably most popular is a flexible
logit model, but it could be a probit model, heteroskedastic probit model,
or something else. Let $G(x,\gamma )$ denote the parametric model for $\eta
(x)$. Under very general assumptions, the Bernoulli quasi-maximum likelihood
estimator, $\hat{\gamma}$, converges in probability to some value, $\gamma
^{\ast }$, which is sometimes called the quasi-true value or pseudo-true
value. If the model for $\eta (x)$ is correctly specified then $G(x,\gamma
^{\ast })=\mathbb{P}(Z=1|X=x)$. At this point, one would use the fitted
probabilities, $G(X_{i},\hat{\gamma})$, to study the LATE overlap condition
using standard methods; for a detailed discussion, see \citet[Chapter 14]{IR2015}.

After estimating the model for $\mathbb{P}(Z=1|X)$, next we estimate models
for $\rho _{0} ( x )$ and $\rho _{1} ( x )$, as defined
in (\ref{rho_0}) and (\ref{rho_1}). These are estimated by separate logit
models for $W$ for the $Z_{i}=0$ and $Z_{i}=1$ subgroups, applying the
inverse probability weights $1/\left[ 1-G(X_{i},\hat{\gamma})\right] 
$ and $1/G(X_{i},\hat{\gamma})$, respectively. The reason for using logit
models for $\rho _{0} ( x )$ and $\rho _{1} ( x )$ is to
ensure that the resulting estimators of the expected probabilities, $\pi
_{0} $ and $\pi _{1}$, are doubly robust, as discussed in \cite{Wooldridge2007}
and \cite{SW2018}. The logit function is the
canonical link function for the Bernoulli distribution, and that ensures the
DR\ property. Let $\Lambda ( \hat{\omega}_{0}+X_{i}\hat{\delta}%
_{0} )$ and $\Lambda ( \hat{\omega}_{1}+X_{i}\hat{\delta}%
_{1} )$ be the logit fitted values, where the estimated parameters are
obtained from the $Z_{i}=0$ and $Z_{i}=1$ subsamples, respectively. For
notational ease we show the indexes as linear functions of $X_{i}$ but,
naturally, any functions of the covariates may appear in the logit models.
In principle, one could use different functions of $X_{i}$, say $h_{0} (X_{i})$ and $h_{1} (X_{i})$, inside the logistic
function, but that seems to be rare in practice.

Having estimated the separate logit models by weighted Bernoulli QMLE, the
DR estimates of $\pi _{0}$ and $\pi _{1}$ are
\begin{equation*}
\hat{\pi}_{0}=N^{-1}\sum_{i=1}^{N}\Lambda ( \hat{\omega}_{0}+X_{i}\hat{%
\delta}_{0} ) \text{, \ }\hat{\pi}_{1}=N^{-1}\sum_{i=1}^{N}\Lambda
( \hat{\omega}_{1}+X_{i}\hat{\delta}_{1} ).
\end{equation*}%
From \cite{Wooldridge2007}, under standard regularity conditions, $\hat{\pi}_{z}$
is consistent for $\pi _{z}$ if the model for $\mathbb{P}(Z=1|X)$ is correct
or if the models for $\mathbb{P}(W=1|X,Z=0)$ and $\mathbb{P}(W=1|X,Z=1)$ are
correct. Naturally, if we know $\mathbb{P}(W=1|Z=1)=1$ then $\hat{\pi}_{1}$
is replaced with one and if we know $\mathbb{P}(W=1|Z=0)=0$ then $\hat{\pi}%
_{0}$ is replaced with zero (the more likely scenario when $Z$ is
eligibility and $W$ is participation).

Next, we show how to obtain DR estimators of $\theta _{0}$ and $\theta _{1}$%
. In doing so, it is useful to write $Y$ with a zero conditional mean error
term:
\begin{equation*}
Y=(1-Z)\mu _{0}(X)+Z\mu _{1}(X)+U,\text{ \ }\mathbb{E}\left( U|X,Z\right) =0,
\end{equation*}%
where $\mu _{0}(X)$ and $\mu _{1}(X)$ are defined in (\ref{mu_0}) and (\ref%
{mu_1}). Note that this is not the usual representation that leads to DR
estimation because $\mu _{0}(X)$ and $\mu _{1}(X)$ are not the potential
outcome conditional means; rather, these are the conditional mean functions
for the (observed) $Z=0$ and $Z=1$ subpopulations, respectively. Therefore,
we must modify the usual double robustness argument.

Let $m(\alpha _{0}+X\beta _{0})$ and $m(\alpha _{1}+X\beta _{1})$ be the
parametric models for $\mu _{0}(X)$ and $\mu _{1}(X)$. Again, for notational
ease we show these depending on an index linear in $X$, whereas they could
depend on (different) transformations of $X$ inside the function $m\left(
\cdot \right) $. We assume that the function $m\left( \cdot \right) $ is
based on the canonical link function for the chosen quasi-log likelihood (QLL) in
the linear exponential family -- which is why we show the mean function to
have the index form. If $Y$ is a binary or fractional response then we couple the
Bernoulli QLL with the logistic mean function -- just as when we estimate $%
\rho _{0} (x)$ and $\rho _{1} (x)$. If $Y\geq 0$,
the appropriate combination is $m\left( \cdot \right) =\exp \left( \cdot
\right) $ with the Poisson QLL\@. When $Y$ has no particular features worth
exploiting, one commonly uses $m\left( \alpha +x\beta \right) =\alpha
+x\beta $ and the least squares objective function (which corresponds to the
normal QLL)\@.

As in the case of estimating the parametric models for $\rho _{0} (x)$ and $\rho _{1} (x)$, the objective functions
for estimating $\left( \alpha _{0},\beta _{0}\right) $ and $\left( \alpha
_{1},\beta _{1}\right) $ are weighted by $1/\left[ 1-G(X_{i},\hat{\gamma})%
\right] $ and $1/G(X_{i},\hat{\gamma})$ for the $Z_{i}=0$ and $Z_{i}=1$
subsamples, respectively. When the mean function corresponds to the
canonical link function in the chosen linear exponential family (LEF), the
first-order conditions for $(\hat{\alpha}_{1},\hat{\beta}_{1})$ can be
written as
\begin{eqnarray}
\sum_{i=1}^{N}\left\{ \frac{Z_{i}}{G(X_{i},\hat{\gamma})}\left[ Y_{i}-m(\hat{%
\alpha}_{1}+X_{i}\hat{\beta}_{1})\right] \right\} &=&0 \label{sample_FOC1}
\\
\sum_{i=1}^{N}\left\{ \frac{Z_{i}}{G(X_{i},\hat{\gamma})}X_{i}^{\prime }%
\left[ Y_{i}-m(\hat{\alpha}_{1}+X_{i}\hat{\beta}_{1})\right] \right\} &=&0.
\label{sample_FOC2}
\end{eqnarray}%
Under general conditions, $(\hat{\alpha}_{1},\hat{\beta}_{1})$ converge in
probability to the (unique) solutions $(\alpha _{1}^{\ast },\beta _{1}^{\ast
})$ to the weighted population moment conditions
\begin{eqnarray}
\mathbb{E}\left\{ \frac{Z}{G(X,\gamma ^{\ast })}\left[ Y-m(\alpha _{1}^{\ast
}+X\beta _{1}^{\ast })\right] \right\} &=&0  \label{FOC1} \\
\mathbb{E}\left\{ \frac{Z}{G(X,\gamma ^{\ast })}X^{\prime }\left[ Y-m(\alpha
_{1}^{\ast }+X\beta _{1}^{\ast })\right] \right\} &=&0.  \label{FOC2}
\end{eqnarray}%
We now show that the solutions to these FOCs result in doubly robust
estimators of $\theta _{0}$ and $\theta _{1}$. We show the argument for the
latter with an almost identical argument for $\theta _{0}$.

As discussed in \cite{Wooldridge2007}, when the weights depend on conditioning
variables -- in this case, $X$ -- and the relevant feature of the
conditional distribution is correctly specified -- in this case, the
conditional mean $\mu _{1}(X)\equiv \mathbb{E}\left( Y|X,Z=1\right) $ --
weighting a suitably chosen objective function does not alter consistency of
the estimators. We can see this directly from the population FOCs. Assume
there are values $\left( \alpha _{1}^{\ast },\beta _{1}^{\ast }\right) $
such that
\begin{equation*}
\mathbb{E}\left( Y|X,Z=1\right) =m(\alpha _{1}^{\ast }+X\beta _{1}^{\ast }),
\end{equation*}%
so that the conditional mean is correctly specified. Then $ZY=Zm(\alpha
_{1}^{\ast }+X\beta _{1}^{\ast })+ZU$ and, since $\mathbb{E}\left(
ZU|X,Z\right) =0$, it follows immediately that
\begin{equation*}
\mathbb{E}\left\{ Z\left[ Y-m(\alpha _{1}^{\ast }+X\beta _{1}^{\ast })\right]
|X\right\} =0.
\end{equation*}%
Because $G(X,\gamma ^{\ast })>0$ is a function of $X$, it follows that
\begin{equation*}
\mathbb{E}\left\{ \left. \frac{Z}{G(X,\gamma ^{\ast })}\left[ Y-m(\alpha
_{1}^{\ast }+X\beta _{1}^{\ast })\right] \right\vert X\right\} =0.
\end{equation*}%
Given $G(X,\gamma ^{\ast })>0$ and sufficient variability in $X$ when $Z=1$,
the solutions to (\ref{FOC1}) and (\ref{FOC2}), $\left( \alpha _{1}^{\ast
},\beta _{1}^{\ast }\right) $, are unique. By iterated expectations,
\begin{equation*}
\theta _{1}=\mathbb{E}\left[ m(\alpha _{1}^{\ast }+X\beta _{1}^{\ast })%
\right].
\end{equation*}%
Similarly, $\theta _{0}=\mathbb{E}[m(\alpha _{0}^{\ast }+X\beta _{0}^{\ast
})]$ when $\mathbb{E}\left( Y|X,Z=0\right) =m(\alpha _{0}^{\ast }+X\beta
_{0}^{\ast })$. This is the first half of the double robustness result,
which does not actually use the assumption of a canonical link in the linear
exponential family.

For the other part of DR, it is useful to express the population FOCs
somewhat differently. Plug in for $Y$ and use $ZY=Z\mu _{1}(X)+ZU$ to get
\begin{eqnarray*}
\mathbb{E}\left[ \frac{Z}{G(X,\gamma ^{\ast })}[\mu _{1}(X)+U-m(\alpha
_{1}^{\ast }+X\beta _{1}^{\ast })]\right] &=&0 \\
\mathbb{E}\left[ \frac{Z}{G(X,\gamma ^{\ast })}X^{\prime }[\mu
_{1}(X)+U-m(\alpha _{1}^{\ast }+X\beta _{1}^{\ast })]\right] &=&0
\end{eqnarray*}%
or, because $\mathbb{E}\left( U|X,Z\right) =0$,
\begingroup
\allowdisplaybreaks
\begin{eqnarray}
\mathbb{E}\left[ \frac{Z}{G(X,\gamma ^{\ast })}[\mu _{1}(X)-m(\alpha
_{1}^{\ast }+X\beta _{1}^{\ast })]\right] &=&0  \label{FOC1_noY} \\
\mathbb{E}\left[ \frac{Z}{G(X,\gamma ^{\ast })}X^{\prime }[\mu
_{1}(X)-m(\alpha _{1}^{\ast }+X\beta _{1}^{\ast })]\right] &=&0.
\label{FOC2_noY}
\end{eqnarray}
\endgroup
By iterated expectations, these equations are equivalent to
\begin{eqnarray}
\mathbb{E}\left[ \frac{\eta (X)}{G(X,\gamma ^{\ast })}[\mu _{1}(X)-m(\alpha
_{1}^{\ast }+X\beta _{1}^{\ast })]\right] &=&0  \label{FOC1_IVPS} \\
\mathbb{E}\left[ \frac{\eta (X)}{G(X,\gamma ^{\ast })}X^{\prime }[\mu
_{1}(X)-m(\alpha _{1}^{\ast }+X\beta _{1}^{\ast })]\right] &=&0.
\label{FOC2_IVPS}
\end{eqnarray}%
When $G(x,\gamma )$ is correctly specified, $\eta (X)=G(X,\gamma ^{\ast })$,
and the first population moment condition becomes
\begin{equation*}
\mathbb{E}\left[ \mu _{1}(X)-m(\alpha _{1}^{\ast }+X\beta _{1}^{\ast })%
\right] =0.
\end{equation*}%
It follows immediately that $\theta _{1}=\mathbb{E}[m(\alpha _{1}^{\ast
}+X\beta _{1}^{\ast })]$, even though the conditional mean function need
not be correctly specified. This part of the DR\ result uses the assumption
that we have chosen the canonical link function for the chosen LEF\ density.
The same argument holds for $\theta _{0}$. Except for adding standard
regularity conditions, we have established consistency of the DR estimators
that combine inverse probability weighting (IPW) and regression adjustment
(RA), where RA is defined generally to include QMLEs in the LEF with a
canonical link function.

Because the LEF/canonical link combinations play an important role in DR
estimation, we summarize the common choices for the quasi-likelihoods and
mean functions in Table \ref{table1}.

\begin{table}[!tb]
\begin{adjustwidth}{-1.25in}{-1.25in}
\centering
\begin{threeparttable}
\caption{Combinations of QLLFs and Canonical Link Functions\label{table1}}
\begin{tabular}{ccc}
\hline\hline
\textbf{Support Restrictions} & \textbf{Mean Function} & \textbf{Quasi-LLF} \\
\hline
None & Linear & Gaussian \\ 
$Y(w)\in \left[ 0,1\right] $ (binary, fractional) & Logistic & Bernoulli \\ 
$Y(w)\in \left[ 0,B\right] $ (count, corner) & Logistic & Binomial \\ 
$Y(w)\geq 0$ (count, continuous, corner) & Exponential & Poisson \\
\hline
\end{tabular}
\end{threeparttable}
\end{adjustwidth}
\end{table}

The first entry in Table \ref{table1} simply means using weighted least squares with
linear conditional mean functions, but the weights here are based on the
instrument propensity score, chosen to achieve double robustness, and have
nothing to do with heteroskedasticity. When $Y$ is binary or fractional (the second entry), a
logistic conditional mean function is more attractive because it ensures
fitted values are in the unit interval. For example, $Y_{i}$ could be the
fraction of retirement savings held in the stock market or the fraction of
students passing a standardized test. The third entry allows for corners at
zero and some unit-specific, known upper bound, $B_{i}$. This $B_{i}$ should
be a conditioning variable -- like the elements of $X_{i}$. For example, $%
Y_{i}$ could be the amount of income put into retirement with $B_{i}$ being
an individual-specific bound determined by legal restrictions. The final
entry is important across many kinds of response variables that are
nonnegative but have no natural upper bound. These outcomes could be count
variables but they could be roughly continuous or have an atom at zero.

In what follows, we summarize the steps for doubly robust estimation of $\tau _{LATE}$ using
IPWRA\@.

\begin{uprocedure}{DR LATE}
\begin{enumerate}
\item[]
\item Estimate a flexible binary response model for the instrument propensity
score, $\eta \left( x\right) =\mathbb{P}\left( Z=1|X=x\right) $; denote the
fitted probabilities $G\left( X_{i},\hat{\gamma}\right) $. In many cases,
one would use a flexible logit. Overlap needs to be studied at this step.
\item Use weighted Bernoulli QMLE to estimate separate (flexible) logit models
for $\mathbb{P}(W=1|X,Z=0)$ and $\mathbb{P}(W=1|X,Z=1)$ (i.e., only using the units with $Z_{i}=0$ and $Z_{i}=1$, respectively), where the weights
in the former case are $1/\left[ 1-G\left( X_{i},\hat{\gamma}\right) \right] 
$, and in the latter case, $1/G\left( X_{i},\hat{\gamma}\right) $. These
produce $\left( \hat{\omega}_{0},\hat{\delta}_{0}\right) $, $\left( \hat{%
\omega}_{1},\hat{\delta}_{1}\right) $, and the fitted probabilities $\Lambda ( \hat{\omega}_{0}+X_{i}\hat{\delta}_{0} )$ and $\Lambda ( \hat{\omega}_{1}+X_{i}\hat{\delta}_{1} )$.
\item Choose conditional mean models for $\mathbb{E}(Y|X,Z=0)$ and $\mathbb{E}%
(Y|X,Z=1)$ that reflect the nature of $Y$\@. These should correspond to the
canonical link functions for the chosen LEF quasi-log likelihood. Use
weights $1/\left[ 1-G\left( X_{i},\hat{\gamma}\right) \right] $ to obtain
the weighted QMLEs of $\left( \alpha _{0},\beta _{0}\right) $ and weights $%
1/G\left( X_{i},\hat{\gamma}\right) $ to obtain the weighted QMLEs of $%
\left( \alpha _{1},\beta _{1}\right) $. (As above, this only uses the units with $Z_{i}=0$ and then $Z_{i}=1$, respectively.) These produce the fitted mean
functions $m ( \hat{\alpha}_{0}+X_{i}\hat{\beta}_{0} )$ and $m ( \hat{\alpha}_{1}+X_{i}\hat{\beta}_{1} )$.
\item Obtain the DR estimator of $\tau _{LATE}$ as
\begin{equation}
\hat{\tau}_{DRLATE}=\frac{N^{-1}\sum_{i=1}^{N}\left[ m ( \hat{\alpha}%
_{1}+X_{i}\hat{\beta}_{1} ) -m ( \hat{\alpha}_{0}+X_{i}\hat{\beta}%
_{0} ) \right] }{N^{-1}\sum_{i=1}^{N}\left[ \Lambda ( \hat{\omega}%
_{1}+X_{i}\hat{\delta}_{1} ) -\Lambda ( \hat{\omega}_{0}+X_{i}\hat{%
\delta}_{0} ) \right] }.\text{ }\square  \label{DRLATE}
\end{equation}%
\end{enumerate}
\end{uprocedure}

\noindent
The DR LATE estimator has the same form as \cite{Frolich2007}, but we use
parametric models that can exploit the nature of $Y$ and we estimate the
parameters in the numerator and denominator using inverse probability
weighting in order to achieve double robustness. Consequently, the numerator
of $\hat{\tau}_{DRLATE}$ is a DR ATE estimator where $Z$ (the instrument) is
taken as the \textquotedblleft treatment\textquotedblright\ and the outcome $%
Y$ is the response. The denominator is a DR ATE estimator where, again, $Z$
is the \textquotedblleft treatment\textquotedblright\ and the actual
treatment indicator, $W$, is the response. Obtaining the estimate for a given
sample is very easy using software packages that support IPWRA estimation.

\subsection{Estimation of LATT}

The IPWRA doubly robust estimators of $\tau _{LATT}$ require a different
weighting scheme. First, there is no need to model $\mu _{1}(X)=\mathbb{E}%
( Y|X,Z=1 ) $ or $\rho _{1}(X)=\mathbb{E} ( W|X,Z=1 )$
because, as shown in (\ref{lattcondid2}), we only need to estimate $%
\mathbb{E} ( Y|Z=1 )$ and $\mathbb{E} ( W|Z=1 )$. But
we need DR estimators of $\mathbb{E}[\mu _{0}(X)|Z=1]$ and $\mathbb{E}[\rho
_{0}(X)|Z=1]$. Following, for example, \cite{SW2018}, we now show that the following population FOC provides DR\
estimators of $\mathbb{E}[\mu _{0}(X)|Z=1]$:
\begin{equation}
\mathbb{E}\left\{ \frac{\left( 1-Z\right) G(X,\gamma ^{\ast })}{\left[
1-G(X,\gamma ^{\ast })\right] }\left[ Y-m(\alpha _{0}^{\ast }+X\beta
_{0}^{\ast })\right] \right\} =0.  \label{FOC_latt_1}
\end{equation}%
Using the same argument as for $\tau _{LATE}$, $\mathbb{E}\left(
Y|X,Z=0\right) =m(\alpha _{0}^{\ast }+X\beta _{0}^{\ast })$ ensures that
\begin{equation*}
\mathbb{E}\left\{ \left( 1-Z\right) \left[ Y-m(\alpha _{0}^{\ast }+X\beta
_{0}^{\ast })\right] |X\right\} =0,
\end{equation*}%
and then (\ref{FOC_latt_1}) holds by iterated expectations. Again, the
weights are nonnegative functions of $X$ and so this does not change that
the solutions to the FOCs are the conditional mean parameters.

For the second half of DR, we use an argument similar to the case of LATE and
write the FOC\ as
\begin{equation*}
\mathbb{E}\left\{ \frac{\left( 1-Z\right) G(X,\gamma ^{\ast })}{\left[
1-G(X,\gamma ^{\ast })\right] }\left[ \mu _{0}\left( X\right) -m(\alpha
_{0}^{\ast }+X\beta _{0}^{\ast })\right] \right\} =0.
\end{equation*}%
By iterated expectations, this FOC is equivalent to
\begin{equation*}
\mathbb{E}\left\{ \frac{\left[ 1-\eta \left( X\right) \right] G(X,\gamma
^{\ast })}{\left[ 1-G(X,\gamma ^{\ast })\right] }\left[ \mu _{0}\left(
X\right) -m(\alpha _{0}^{\ast }+X\beta _{0}^{\ast })\right] \right\} =0,
\end{equation*}%
where $\eta \left( X\right) =\mathbb{P}\left( Z=1|X\right) $. When the
instrument propensity score is correctly specified, $\eta \left( X\right)
=G(X,\gamma ^{\ast })$, this equation becomes
\begin{equation*}
\mathbb{E}\left\{ \eta \left( X\right) \left[ \mu _{0}\left( X\right)
-m(\alpha _{0}^{\ast }+X\beta _{0}^{\ast })\right] \right\} =0,
\end{equation*}%
and by iterated expectations, this is equivalent to
\begin{equation*}
\mathbb{E}\left\{ Z\left[ \mu _{0}\left( X\right) -m(\alpha _{0}^{\ast
}+X\beta _{0}^{\ast })\right] \right\} =0.
\end{equation*}%
It now follows that
\begin{equation*}
\mathbb{E}\left[ \mu _{0}\left( X\right) |Z=1\right]=\mathbb{E}\left[ m(\alpha _{0}^{\ast
}+X\beta _{0}^{\ast })|Z=1\right]
\end{equation*}%
even though the mean function is arbitrarily misspecified. This is the
second half of the DR result for $\tau _{LATT}$.

\begin{uprocedure}{DR LATT}
\begin{enumerate}
\item[]
\item Using all of the data, estimate a flexible binary response model for
the instrument propensity score, $\eta \left( x\right) =\mathbb{P}\left(
Z=1|X=x\right) $; denote the fitted probabilities $G\left( X_{i},\hat{\gamma}%
\right) $. In many cases, one would use a flexible logit. The LATT overlap
assumption needs to be studied at this step.
\item Use the units with $Z_{i}=0$ and weighted Bernoulli QMLE to estimate a
(flexible) logit model for $\rho_{0} (X) = \mathbb{P}(W=1|X,Z=0)$, where the weights are $%
G(X_{i},\hat{\gamma})/\left[ 1-G\left( X_{i},\hat{\gamma}\right) \right] $.
This produces $\left( \hat{\omega}_{0},\hat{\delta}_{0}\right) $ and the
fitted probabilities $\Lambda ( \hat{\omega}_{0}+X_{i}\hat{\delta}%
_{0} )$.
\item Choose a conditional mean model for $\mu_{0} (X) = \mathbb{E}(Y|X,Z=0)$ that reflects
the nature of $Y$\@. This should correspond to the canonical link function
for the chosen LEF quasi-log likelihood. Use the units with $Z_{i}=0$ and weights $G(X_{i},\hat{\gamma})/%
\left[ 1-G\left( X_{i},\hat{\gamma}\right) \right] $ to obtain the weighted
QMLEs of $\left( \alpha _{0},\beta _{0}\right) $. This produces the fitted mean, $m ( \hat{\alpha}_{0}+X_{i}\hat{\beta}_{0} )$.
\item Obtain the DR estimator of $\tau _{LATT}$ as
\begin{equation}
\hat{\tau}_{DRLATT}=\frac{ \bar{Y}_{1}-N_{1}^{-1}\sum_{i=1}^{N}Z_{i} m%
( \hat{\alpha}_{0}+X_{i}\hat{\beta}_{0} ) }{ \bar{W}%
_{1}-N_{1}^{-1}\sum_{i=1}^{N}Z_{i} \Lambda ( \hat{\omega}_{0}+X_{i}\hat{%
\delta}_{0} ) },  \label{DRLATT}
\end{equation}%
where $\bar{Y}_{1}=N_{1}^{-1}\sum_{i=1}^{N}Z_{i}Y_{i}$ and $\bar{W}%
_{1}=N_{1}^{-1}\sum_{i=1}^{N}Z_{i}W_{i}$. $\square $
\end{enumerate}
\end{uprocedure}

\noindent
The numerator of $\hat{\tau}_{DRLATT}$ is a DR\ estimator of ATT where $%
Z $ is taken as the \textquotedblleft treatment\textquotedblright\ and $Y$
is the outcome. Similarly, the denominator is a DR\ estimator of ATT
where $Z$ is taken as the treatment and the actual treatment status, $W$, is
the outcome.

\section{Inference}

\label{inference}

To perform valid inference on $\tau _{LATE}$ and $\tau _{LATT}$, such as
obtaining confidence intervals, we need to obtain standard errors for $%
\hat{\tau}_{DRLATE}$ and $\hat{\tau}_{DRLATT}$ that account for the sampling
error in all of the estimators and also the sample averages in (\ref{DRLATE}%
) and (\ref{DRLATT}). One possibility is to use a resampling scheme. The
most convenient is the nonparametric bootstrap, which resamples all
variables (and so accounts for sampling error in the estimators and in the
averages). Given a bootstrapped standard error we can easily obtain
asymptotically valid confidence intervals for $\tau _{LATE}$ and $\tau
_{LATT}$.

Because bootstrapping is not always desirable, we summarize a method of
obtaining a valid standard error that stacks the first-order conditions for
all estimation problems and then obtains a proper standard error from the
resulting generalized method of moments framework. We explicitly consider
how to do this for $\hat{\tau}_{DRLATE}$.

To allow one to choose the treatment binary response models and the
conditional mean models in a way that does not (theoretically) lead to DR\
estimation, let $m_{0}(X,\alpha _{0},\beta _{0})$ and $m_{1}(X,\alpha
_{1},\beta _{1})$ be the parametric models for $\mu _{0}(X)$ and $\mu
_{1}(X) $, respectively, and let $p_{0}(X,\omega _{0},\delta _{0})$ and $%
p_{1}(X,\omega _{1},\delta _{1})$ be the parametric models for $\rho _{0}(X)$
and $\rho _{1}(X)$, respectively. $G(X,\gamma )$ is the parametric model for 
$\mathbb{P}\left( Z=1|X\right) $. Let $\tau _{Y|Z}$ and $\tau _{W|Z}$ be the
numerator and denominator of the LATE, respectively, that is, 
\begingroup
\allowdisplaybreaks
\begin{eqnarray*}
\tau _{Y|Z} &=&\theta _{1}-\theta _{0}, \\
\tau _{W|Z} &=&\pi _{1}-\pi _{0}.
\end{eqnarray*}
\endgroup
Let $\phi =(\alpha _{0},\beta _{0},\alpha _{1},\beta _{1},\omega _{0},\delta
_{0},\omega _{1},\delta _{1},\gamma ,\tau _{Y|Z},\tau _{W|Z})$ and $%
S=(Y,X,W,Z)$. The estimators can be defined as a solution for the following
sample moment equation: 
\begin{equation}
\sum_{i=1}^{N}\psi (S_{i},\hat{\phi})=0.  \label{M-est}
\end{equation}%
By standard results for estimators that solve a first-order condition, it
follows that: 
\begin{equation}
\sqrt{N}(\hat{\phi}-\phi )\overset{a}{\sim }\mathrm{Normal}\left(
0,A^{-1}VA^{-1}\right) ,  \label{mvar2}
\end{equation}%
where 
\begin{eqnarray*}
A &\equiv &\mathbb{E}\left[ \frac{\partial \psi (S_{i},\phi )}{\partial \phi
^{\prime }}\right], \\
V &\equiv &\mathbb{V}[\psi (S_{i},\phi )]=\mathbb{E}[\psi (S_{i},\phi )\psi
(S_{i},\phi )^{\prime }].
\end{eqnarray*}%
Using the moment functions related to each parameter, the moment function $%
\psi (S_{i},\hat{\phi})$ in equation (\ref{M-est}) can be written explicitly
in the following way: 
\begin{equation*}
\psi (S_{i},\phi )=\left( 
\begin{array}{c}
\psi _{1}(S_{i},\phi ) \\ 
\psi _{2}(S_{i},\phi ) \\ 
\psi _{3}(S_{i},\phi ) \\ 
\psi _{4}(S_{i},\phi ) \\ 
\psi _{5}(S_{i},\phi ) \\ 
\psi _{6}(S_{i},\phi ) \\ 
\psi _{7}(S_{i},\phi )%
\end{array}%
\right) =\left( 
\begin{array}{c}
\frac{Z_{i}}{G(X_{i},\gamma )}\frac{\partial q_{1}^{y}(Y_{i},X_{i}; \alpha
_{1},\beta _{1})}{\partial (\alpha _{1},\beta _{1})} \\ 
\frac{1-Z_{i}}{1-G(X_{i},\gamma )}\frac{\partial
q_{0}^{y}(Y_{i},X_{i}; \alpha _{0},\beta _{0})}{\partial (\alpha _{0},\beta
_{0})} \\ 
\frac{Z_{i}}{G(X_{i},\gamma )}\frac{\partial q_{1}^{w}(W_{i},X_{i}; \omega
_{1},\delta _{1})}{\partial (\omega _{1},\delta _{1})} \\ 
\frac{1-Z_{i}}{1-G(X_{i},\gamma )}\frac{\partial
q_{0}^{w}(W_{i},X_{i}; \omega _{0},\delta _{0})}{\partial (\omega _{0},\delta
_{0})} \\ 
\frac{Z_{i}-G(X_{i},\gamma )}{G(X_{i},\gamma )(1-G(X_{i},\gamma ))%
}\frac{\partial G(X_{i},\gamma )}{\partial \gamma } \\ 
m_{1}(X_{i},\alpha _{1},\beta _{1})-m_{0}(X_{i},\alpha _{0},\beta _{0})-\tau
_{Y|Z} \\ 
p_{1}(X_{i},\omega _{1},\delta _{1})-p_{0}(X_{i},\omega _{0},\delta
_{0})-\tau _{W|Z}%
\end{array}%
\right) ,
\end{equation*}%
where $q_{z}^{y}(\cdot )$ and $q_{z}^{w}(\cdot )$ are the objective
functions for the estimation problems involving $Y$ and $W$, respectively.
The moment condition $\psi _{1}(S_{i},\phi )$ corresponds to the FOCs given
in equations (\ref{FOC1}) and (\ref{FOC2}). Similarly, $\psi _{2}(S_{i},\phi )$
is the FOC for the estimation of $(\alpha _{0},\beta _{0})$, and so on. The
moment conditions $\psi _{6}(S_{i},\phi )$ and $\psi _{7}(S_{i},\phi )$
account for the sampling variation in $X_{i}$ in obtaining $\hat{\tau}_{Y|Z}$
and $\hat{\tau}_{W|Z}$.

The asymptotic distribution of any parametric LATE estimator that uses
consistent estimators of $\tau _{Y|Z}$ and $\tau _{W|Z}$ can be derived for
a known joint asymptotic distribution of these estimators $\hat{\tau}_{Y|Z}$
and $\hat{\tau}_{W|Z}$, which satisfy:%
\begin{equation*}
\sqrt{N}\left[ 
\begin{pmatrix}
\hat{\tau}_{Y|Z} \\ 
\hat{\tau}_{W|Z}%
\end{pmatrix}%
-%
\begin{pmatrix}
\tau _{Y|Z} \\ 
\tau _{W|Z}%
\end{pmatrix}%
\right] \overset{d}{\rightarrow }\mathrm{Normal}(0,\Omega ),
\end{equation*}%
where $\Omega $ is the $2\times 2$ variance-covariance matrix corresponding
to the lower right block of $A^{-1}VA^{-1}$. Given $\hat{\tau}_{LATE}=\hat{%
\tau}_{Y|Z}/\hat{\tau}_{W|Z}$, we can apply the delta method to obtain
\begin{eqnarray}
\mathbb{AVAR}\left( \hat{\tau}_{DRLATE}\right) &=&\left( \frac{1}{\tau
_{W|Z}^{2}}\right) \mathbb{AVAR}\left( \hat{\tau}_{Y|Z}\right) +\left( \frac{%
\tau _{Y|Z}}{(\tau _{W|Z})^{2}}\right) ^{2}\mathbb{AVAR}\left( \hat{\tau}%
_{Y|Z}\right)  \nonumber \\
&&-\left( \frac{2\tau _{Y|Z}}{(\tau _{W|Z})^{3}}\right) \mathbb{ACOV}\left( 
\hat{\tau}_{Y|Z},\hat{\tau}_{W|Z}\right).  \label{asymvar_late}
\end{eqnarray}%
The three asymptotic variance terms are available from $\hat{\Omega}/N$, and
the other terms are easily estimated by plugging in $\hat{\tau}_{Y|Z}$ and $%
\hat{\tau}_{W|Z}$.

\section{A Test Comparing LATT and ATT Estimators}

\label{testing}

In textbook treatment of instrumental variables, where the treatment effect
is taken to be constant, it is fairly common to construct a \cite{Hausman1978} test for comparing the IV\ estimator with the OLS estimator of the
coefficients on the endogenous explanatory variable, $W$. The idea is that,
with good controls in $X$, maybe $W$ is unconfounded conditional on $X$, and
the instrumental variables are not needed. In practice, with cross-sectional
data one uses a heteroskedasticity-robust version of the Hausman test that
is easily implemented using a control function regression; see, for example, \citet[Section 6.3.1]{Wooldridge2010}. In the traditional setting, efficiency
considerations are the primary reason for preferring OLS if the Hausman test
does not reject the null that $W$ is unconfounded: the OLS estimator is
typically much more precise than the IV\ estimator.

Efficiency remains a valid consideration when treatment effects are
heterogeneous, as in the current setting, but the usual Hausman test is no longer valid because OLS and IV estimands constitute different weighted averages of heterogeneous treatment effects even under the null. Instead, if $W$ is unconfounded conditional
on $X$ then, with sufficient overlap, one can identify the average treatment
effect on the treated (ATT) without requiring an instrumental variable. As
shown by \citetalias{DHL2014b}, under one-sided noncompliance, LATT
is the same as ATT\@. Therefore, it makes sense to use doubly robust
estimators of the ATT (DR ATT) that do not use an instrument and compare that
with the DR LATT estimates.

The ATT parameter is
\begin{equation}
\tau _{ATT}=\mathbb{E}\left[ Y(1) -Y(0) \mid W=1\right].
\label{ATT}
\end{equation}%
Following \cite{SW2018}, we use DR estimators of $%
\tau _{ATT}$ that are natural given the form of the DR LATT estimators in
Section \ref{robust}. We no longer need an instrument propensity score.
Instead, let $F\left( x,\gamma \right) $ be a model of the treatment
propensity score, $\mathbb{P}\left( W=1|X=x\right) $. Given a random sample
of size $N$, let $\hat{\gamma}$ be the (quasi-) MLE\ based on the Bernoulli
log likelihood. As before, a typical choice of $F\left( x,\gamma \right) $
is a flexible logistic function.

Under the assumption that $W$ is independent of $Y(0)$
conditional on $X$ and the overlap assumption
\begin{equation}
\mathbb{P}\left( W=1|X=x\right) <1\text{ for almost all }x\in \mathcal{X}\text{,}
\label{overlap_ATT}
\end{equation}%
we can obtain DR estimators of ATT using quasi-MLE in the LEF with a
canonical link function.

The conditional mean we need to estimate is $\mathbb{E}\left[ Y(0) |X=x\right] $, and we again take the model to have the index form, $%
m ( \alpha _{0}+x\beta _{0} )$ (reusing earlier notation). As
before, we can choose $m ( \alpha _{0}+x\beta _{0} )$ to reflect
the nature of the outcome variable $Y(0)$. To stay within the DR
class of estimators using IPWRA, $m ( \cdot )$ will be the
identity, logistic, or exponential function in the vast majority of
applications.

For consistent estimation of $\tau _{ATT}$ we can get by with the
conditional mean version of unconfoundedness of $W$ conditional on $X$,
\begin{equation*}
\mathbb{E}\left[ Y(0) |W,X\right] =\mathbb{E}\left[ Y\left(
0\right) |X\right] .
\end{equation*}%
Under this assumption, if the mean is correctly specified then $\alpha _{0}$
and $\beta _{0}$ are identified by
\begin{equation*}
\mathbb{E}\left( Y|W=0,X\right) =\mathbb{E}\left[ Y(0) |X\right]
=m ( \alpha _{0}+X\beta _{0} ).
\end{equation*}%
Letting $q\left( y,m\right) $ be the quasi-log likelihood function, $%
\hat{\alpha}_{0}$ and $\hat{\beta}_{0}$ solve the weighted QMLE problem
\begin{equation*}
\max_{a_{0}\mathbf{,}b_{0}}\sum_{i=1}^{N}\frac{F(X_{i}\hat{\gamma})}{1-F(X_{i}\hat{\gamma})}\left( 1-W_{i}\right) q\left( Y_{i},m (
a_{0}+X_{i}b_{0} ) \right) ,
\end{equation*}%
where the estimation is done on the control sample and the weighting ensures
the DR property; see \cite{SW2018}. Given the
estimates, the DR\ estimator of $\tau _{ATT}$ is
\begin{equation}
\hat{\tau}_{DRATT}=\bar{Y}_{1}-N_{1}^{-1}\sum_{i=1}^{N}W_{i}\cdot m ( \hat{\alpha}_{0}+X_{i}\hat{\beta}_{0} ),  \label{dr_att}
\end{equation}%
where $\bar{Y}_{1}$ is the average outcome over the treated units and $N_{1}$
is now the number of treated (not eligible) units. The estimator in (\ref%
{dr_att}) has a simple interpretation as an imputation estimator, as the
second term is a DR estimator of $\mathbb{E}\left[ Y(0) |W=1%
\right] $ obtained by first imputing $\mathbb{E}\left[ Y_{i}\left( 0\right)
|W_{i}=1,X_{i}\right] $ using the mean function estimated from the $W_{i}=0$
units. This DR estimator is pre-programmed in popular statistics and
econometrics packages.

Given $\hat{\tau}_{DRLATT}$ from Section \ref{robust} and $\hat{\tau}_{DRATT}$ in (\ref{dr_att}), we can test the null hypothesis that treatment is unconfounded given $X$,
provided the instrument $Z$ is such that one-sided noncompliance holds so
that $\tau _{LATT}=\tau _{ATT}$. A formal comparison is based on the
statistic
\begin{equation}
\frac{ \hat{\tau}_{DRLATT}-\hat{\tau}_{DRATT} }{\mathrm{se} \left( \hat{\tau}_{DRLATT}-\hat{\tau}_{DRATT} \right)}.  \label{latt_t_stat}
\end{equation}%
Under the null hypothesis, we cannot say that $\hat{\tau}_{DRATT}$ is the
efficient estimator in a suitable class that includes $\hat{\tau}_{DRLATT}$,
and so the standard error $\mathrm{se} \left( \hat{\tau}_{DRLATT}-\hat{\tau}_{DRATT} \right)$ does not simplify. Nevertheless, bootstrapping is
computationally feasible, or one can extend the calculations in Section \ref%
{inference} to obtain an analytical standard error. Even without one-sided noncompliance, a similar test can also be constructed to assess treatment effect heterogeneity by comparing DR LATE and DR LATT, or IV and DR LATE, or IV and DR LATT estimates.

\section{Empirical Applications}
\label{applications}

In this section we reanalyze the data in \cite{Abadie2003} and \cite{TAWBF2014} to illustrate our new doubly robust estimators. 

\subsection{The Effects of 401(k) Retirement Plans}
\label{abadie}

The 401(k) retirement plans were introduced in the US to increase saving for retirement by allowing tax advantages for the contributions to the retirement account. The policy-relevant empirical question is whether the 401(k) program is effective for increasing savings or only crowds out other personal saving. Since the individuals who participate in 401(k) plans are likely to have different saving preferences than non-participating individuals, a simple comparison of savings of the two groups is likely to provide an upward-biased estimate of the true effect. Different from other saving plans, 401(k) participation requires eligibility which, in turn, is determined by the employer. \cite{Abadie2003} argues, following \cite{PVW1994, PVW1995}, that 401(k) eligibility can be used as a conditionally independent instrument to estimate the effects of 401(k) participation on savings. Several recent papers have revisited this empirical question and estimated the LATE using the same instrument by  different methods \citep[e.g.,][]{BCFVH2017, Chernozhukovetal2018, Heiler2022, SASX2022}. Since our new doubly robust estimator relies on the same identifying assumptions, we also choose to reanalyze the effect of 401(k) participation. We use the same dataset as \cite{Abadie2003}. The data consists of a sample of 9,275 households from  the Survey of Income and Program Participation (SIPP) of 1991. As outcomes, we consider net financial assets (in US dollars) and participation in an individual retirement account (IRA), which is another popular tax-deferred saving plan in the US\@. The treatment is an indicator for participation in a 401(k) plan. The set of control variables consists of family income, age, marital status, and family size. Age enters the conditional mean functions quadratically.

\begin{table}[!tb]
\begin{adjustwidth}{-1.25in}{-1.25in}
\centering
\begin{threeparttable}
\caption{Estimates of the Effects of 401(k) Participation\label{tab:table2}}
\begin{tabular}{cccccc}
\hline\hline
	& \multicolumn{2}{c}{\textbf{(A)}} &       & \multicolumn{2}{c}{\textbf{(B)}} \\
	& \multicolumn{2}{c}{\textbf{Net financial assets}} &       & \multicolumn{2}{c}{\textbf{IRA}} \\
\cline{2-3}
\cline{5-6}
	& \textbf{Estimate} & \textbf{Std.~err.} &       & \textbf{Estimate} & \textbf{Std.~err.} \\
\hline
\textbf{OLS}	& 13,527 & (1,810) &       & 0.0569 & (0.0103) \\
\textbf{IV}	& 9,419  & (2,152) &       & 0.0274 & (0.0132) \\
	&       &       &       &       &  \\
	& \multicolumn{5}{c}{\textbf{Hausman test}}\\
\textbf{$\mathbf{H_0}$: OLS $=$ IV} &\multicolumn{2}{c}{$p = 0.0004$} &       & \multicolumn{2}{c}{$p = 0.0004$} \\
	&       &       &       &       &  \\
	& \multicolumn{2}{c}{\textbf{ATE}} &       & \multicolumn{2}{c}{\textbf{ATE}} \\
\textbf{IPWRA} & 10,767 & (1,772) &       & 0.0554 & (0.0096) \\
	&       &       &       &       &  \\
	& \multicolumn{2}{c}{\textbf{LATE}} &       & \multicolumn{2}{c}{\textbf{LATE}} \\
\textbf{IPW} & 3,994  & (4,891) &       & 0.0165 & (0.0135) \\
\textbf{RA} & 8,467  & (1,991) &       & 0.0338 & (0.0128) \\
\textbf{IPWRA} & 8,046  & (2,587) &       & 0.0361 & (0.0128) \\
\textbf{AIPW} & 5,416  & (4,176) &       & 0.0404 & (0.0131) \\
	&       &       &       &       &  \\
	& \multicolumn{2}{c}{\textbf{ATT}} &       & \multicolumn{2}{c}{\textbf{ATT}} \\
\textbf{IPWRA} & 12,673 & (3,329) &       & 0.0697 & (0.0110) \\
	&       &       &       &       &  \\
	& \multicolumn{2}{c}{\textbf{LATT}} &       & \multicolumn{2}{c}{\textbf{LATT}} \\
\textbf{IPWRA} & 10,918 &  (3,709)  &       & 0.0413 & (0.0143) \\
	&       &       &       &       &  \\
	& \multicolumn{5}{c}{\textbf{Hausman test}}\\
\textbf{$\mathbf{H_0}$: ATT $=$ LATT} & \multicolumn{2}{c}{$p = 0.457$} &       & \multicolumn{2}{c}{$p = 0.001$}  \\
\hline
\end{tabular}
\begin{footnotesize}
\begin{tablenotes}[flushleft]
\item \textit{Notes:} The data are \cite{Abadie2003}'s subsample of the Survey of Income and Program Participation (SIPP) of 1991. The sample size is 9,275. The outcomes are net financial assets (Panel A) and a binary indicator for participation in IRAs (Panel B)\@. The treatment is an indicator for 401(k) participation. The instrument is an indicator for 401(k) eligibility. The set of covariates consists of family income, age, age squared, marital status, and family size. ``OLS'' and ``IV'' are the estimates of the coefficient on the endogenous treatment with covariates (and instrument) listed above. The remaining estimators are defined in the main text. Standard errors are in parentheses. For OLS and IV, we report robust standard errors. For the remaining estimators, our standard errors follow from the GMM framework in Section \ref{inference}.
\end{tablenotes}
\end{footnotesize}
\end{threeparttable}
\end{adjustwidth}
\end{table}

Table \ref{tab:table2} reports the estimates of the parameters of interest together with asymptotic standard errors for both outcome variables. The first two rows display the coefficient estimates for 401(k) participation from OLS and IV estimation, respectively. OLS estimates of the participation coefficient for both regressions are positive and significant. However, as mentioned above, due to unobserved preferences for saving, it is likely that these overestimate the true effect even after conditioning on various individual characteristics. The usual IV estimates are indeed much smaller than the OLS estimates, although they are still positive and significant. Moreover, the Hausman test for the absence of endogeneity strongly rejects for both outcomes.

Next, we report the average treatment effect (ATE) of the participation in a 401(k) plan, as estimated by IPWRA\@. The ATE is identified if there are no unmeasured confounders. If this is not the case, similar to the OLS coefficient, we also expect the ATE estimates to be biased. The ATE estimate of 401(k) participation on total net financial assets is smaller than the OLS estimate and slightly larger than the IV estimate. The ATE on having an IRA account is also larger than the corresponding IV estimate but closer to the OLS estimate. Following the estimated ATEs, the LATE estimates are also reported. Specifically, we estimate the LATE using IPW, RA, AIPW, and our proposed IPWRA estimator. The LATE is identified if the assumptions discussed in Section \ref{identification} are satisfied. IPW and AIPW estimates of the LATE on net financial assets are small in magnitude and very imprecisely estimated. On the other hand, the RA and IPWRA estimates are closer to the IV estimates and relatively very precise. The IPW estimate of the LATE on IRA participation is insignificant while the estimates based on RA, IPWRA, and AIPW are significant and rather similar in magnitude. Additionally, we estimate the ATT and LATT of participation in a 401(k) plan. Since there is one-sided noncompliance, we can test the equality of ATT and LATT as discussed in Section \ref{testing}. For net financial assets, we cannot reject the equality of the two treatment effects, ATT and LATT\@. This suggests that participation in a 401(k) plan might be unconfounded conditional on the set of variables that we control for, following \cite{Abadie2003}. However, the ATT and LATT for the probability of having an IRA are statistically different from each other based on our proposed test, which underscores the importance of IV estimation under the maintained assumptions of Section \ref{identification}.

The results of this application are reassuring given that our proposed DR method provides estimates in a reasonable range with good precision. At the same time, the AIPW estimate of the LATE on net financial assets -- that is, the other doubly robust estimate -- is very imprecise, which is in line with previous criticism of AIPW estimation in other contexts \citep{KS2007}. The RA estimate of the LATE has a smaller standard error but does not enjoy the double robustness property. The differences between the LATE estimates for the binary outcome, IRA participation, are minor, with the exception of the small and insignificant IPW estimate.

\subsection{The Effects of Medicaid}
\label{medicaid}

In our second empirical application, we revisit \cite{TAWBF2014}'s analysis of the data from the Oregon Health Insurance Experiment. In 2008, the state of Oregon decided to offer about 10,000 spots in Medicaid using a lottery that randomly selected eligible households from a larger pool of applicants. \cite{Finkelsteinetal2012}, \cite{TAWBF2014}, \cite{DL2022}, and \cite{JCK2022}, among others, used Oregon's lottery assignment as a binary instrument to assess the effects of Medicaid on various outcomes related to health and healthcare utilization. Here, following much of the previous work, we focus on emergency room (ER) visits at the extensive and intensive margins. In other words, we use our proposed method to analyze the effects of Medicaid on a binary outcome indicating any ER visits in the study period as well as on a count outcome indicating the number of visits.

In what follows, we use \cite{TAWBF2014}'s administrative data with over 24,000 observational units. The endogenous treatment variable is defined as ``ever enrolled in Medicaid'' during the study period. Table \ref{tab:table3} reports a number of estimates together with their associated standard errors. OLS refers to the coefficient estimates for the endogenous treatment variable. ATE and ATT refer to the IPWRA estimates of these parameters. IV refers to the coefficient estimates for the endogenous treatment variable, which use the binary lottery instrument. Finally, LATE and LATT are estimated using our proposed estimators. Following \cite{TAWBF2014}, all regressions include indicators for different numbers of household members on the lottery list, which is necessary for instrument validity, and past outcome data, which should improve the precision of the final estimates.

\begin{table}[!tb]
\begin{adjustwidth}{-1.25in}{-1.25in}
\centering
\begin{threeparttable}
\caption{Estimates of the Effects of Medicaid\label{tab:table3}}
\begin{tabular}{cccccccc}
\hline\hline
	& \textbf{OLS} & \textbf{ATE} & \textbf{ATT} &       & \textbf{IV} & \textbf{LATE} & \textbf{LATT} \\
\hline
\multicolumn{8}{c}{\textbf{Outcome: ER visits (any)}} \\
\textbf{Estimate} & 0.1355 & 0.1348 & 0.1360 &       & 0.0697 & 0.0696 & 0.0812 \\
\textbf{Std.~err.} & (0.0068) & (0.0069) & (0.0069) &       & (0.0239) & (0.0238) & (0.0246) \\
	&       &       &       &       &       &       &  \\
\multicolumn{8}{c}{\textbf{Outcome: ER visits (\#)}} \\
\textbf{Estimate} & 0.4611 & 0.5320 & 0.5622 &       & 0.3880 & 0.4508 & 0.4701 \\
\textbf{Std.~err.} & (0.0340) & (0.0346) & (0.0385) &       & (0.1070) & (0.1254) & (0.1141) \\
\hline
\end{tabular}
\begin{footnotesize}
\begin{tablenotes}[flushleft]
\item \textit{Notes:} The data are \cite{TAWBF2014}'s sample from the Oregon Health Insurance Experiment. The sample sizes are 24,646 (top panel) and 24,615 (bottom panel). The outcomes are an indicator for any ER visits (top panel) and the (censored) number of ER visits (bottom panel) in the study period. The treatment is an indicator for Medicaid coverage. The instrument is an indicator for whether a given household was selected by the Medicaid lottery. The set of covariates consists of indicators for different numbers of household members on the lottery list (both panels), an indicator for any ER visits before the randomization (top panel), and the number of ER visits before the randomization (bottom panel). ``OLS'' and ``IV'' are the estimates of the coefficient on the endogenous treatment with covariates (and instrument) listed above. The remaining estimators are defined in the main text. Standard errors, clustered on the household identifier, are in parentheses. For OLS and IV, we report cluster-robust standard errors. For the remaining estimators, our standard errors follow from the GMM framework in Section \ref{inference}.
\end{tablenotes}
\end{footnotesize}
\end{threeparttable}
\end{adjustwidth}
\end{table}

It turns out that the OLS, ATE, and ATT estimates (and their standard errors) are almost identical for the binary outcome and quite similar for the count outcome, although in the latter case the OLS estimates are smaller than those of the ATE and ATT\@. In any case, the estimates show a significant positive correlation between Medicaid and ER utilization both on the extensive and intensive margins. Due to treatment endogeneity, however, these estimates cannot be interpreted as causal effects. The last three columns of Table \ref{tab:table3} take this endogeneity into account and rely on the instrumental variable for the identification of causal effects. It turns out that these estimates are much smaller than the OLS, ATE, and ATT estimates, especially in the case of the binary outcome, where the estimates are now roughly half as large. Like in \cite{TAWBF2014}, and unlike in an earlier analysis by \cite{Finkelsteinetal2012} that relied on mail survey data, these estimates are also significantly different from zero. Our analysis, however, also reveals an interesting dimension of treatment effect heterogeneity: LATT, the effect on the treated compliers, appears to be larger than the usual LATE\@. A formal comparison of the two objects yields a \emph{p}-value of about 0.04 for the binary outcome and 0.47 for the count outcome. The fact that the LATT may be larger than the LATE is likely due to treatment effect heterogeneity across households of different sizes: effects of Medicaid on ER utilization are more pronounced in larger households \citep[cf.][]{DL2022}, which are also more likely to be treated given the lottery design.

\section{Simulations}
\label{simulations}

In this section, we conduct a Monte Carlo study to assess the bias and precision of our DR LATE estimator in comparison with other existing estimators. In particular, we focus on the effect of certain types of misspecification on the bias and precision. To eliminate to some extent the arbitrariness in choosing the data-generating process, we generate our Monte Carlo samples to mimic some statistical features of the 401(k) dataset we used in Section \ref{abadie}. We draw 1,000 samples with $N=1{,}000$ and the same number of samples with $N=4{,}000$ observations. Each sample is constructed in the following steps. First, we draw two random variables from a bivariate normal distribution. The parameters of the bivariate normal distribution are set equal to the empirical means and covariances of age and log income in the 401(k) data. The simulated log income is then exponentiated to generate the income variable. As an additional covariate, we take the square of the simulated age variable. Thus, our full set of covariates, $X$, includes three variables: income, age, and age squared. The instrumental variable $Z$ is generated according to
\begin{equation}
	Z = \1  \left(\Lambda \left( 
	\gamma_0 + X\gamma_x\right) > U_z \right), \label{dgp:z}
\end{equation} 
where $\gamma=(\gamma_0,\gamma_x)$ corresponds to the estimated coefficient vector from a logit regression of 401(k) eligibility on a constant, income, age, and age squared using the original data (see column (1) of Table \ref{tab1} in the Appendix). The random variable $U_z$ is drawn from the standard uniform distribution and $\Lambda(\cdot)$ is the logistic cdf. Then, we construct $D(1)$ as follows:
\begin{equation}
	D(1) = \1  \left(\Lambda \left( \omega_{0}+X\delta_{0}\right) > U_1 \right), \label{dgp:d1}
\end{equation} 
where the coefficients are from a logit regression using observations with $Z_{i}=1$ (column (2) of Table \ref{tab1} in the Appendix) in the original data and $ U_1$ is drawn from the standard uniform distribution. Finally, we generate two outcome variables. One mimics the continuous outcome variable, net financial assets, and the other mimics the binary outcome variable, participation in IRA\@. The continuous outcome $Y(z)$ is generated using the following linear model:
\begin{equation}	
	Y(z) =  \alpha_{z}+X\beta_{z} + \varepsilon_z \quad \text{ for } z=0,1. \label{dgp:conty}
\end{equation} 
We use the coefficients from two separate regressions of the outcome variable on the set of covariates for $Z_i=1$  and $Z_i=0$ subsamples in the original data (columns (3) and (4) of Table \ref{tab1} in the  Appendix). The error terms $\varepsilon_z$ are drawn from a normal distribution with mean zero and variance $\sigma_z^2$, where $\sigma_z^2$ is the mean squared residual from the regression for $Z_i=z$. The binary outcome variable is constructed similarly to the potential treatment variable using logit link:
 \begin{equation}
 	Y(z) =  \1  \left( \Lambda \left(\alpha_{z}+X\beta_{0} \right) > U_y \right)  \quad \text{ for } z=0,1, \label{dgp:biny}
 \end{equation} 
with coefficients from two separate logistic regressions of the binary indicator of IRA participation on the set of covariates (columns (5) and (6) of Table \ref{tab1} in the Appendix) and $U_y$ drawn from the standard uniform distribution. For the simulated data, the ``true'' values of the LATE are \$8,816.5 for net financial assets and 0.036 for the probability of IRA participation.

For each simulated sample, in addition to our proposed method, we estimate the LATE using IV, RA, IPW, and AIPW\@. The RA estimator of the LATE is constructed similarly to our proposed estimator with the crucial difference that the objective functions are not weighted. The IPW and AIPW estimators of the LATE are the ratios of two IPW and AIPW estimators of ATEs of $Z$ on $Y$ and $W$, respectively. 

In general, the LATE estimators based on the identification result in \eqref{lateid} require estimation of four conditional means. However, as mentioned earlier, since in the original 401(k) data and in our simulation design $Z=0$ implies $D=0$, the second term in the denominator of \eqref{lateid} is zero and we do not need to estimate \eqref{rho_0}. Thus, for the RA approach, we estimate the two conditional means in the numerator of \eqref{lateid} by two separate linear (logistic) regressions of continuous (binary) $Y(z)$  using observations from subsamples with $Z_i=1$ and $Z_i=0$, respectively. Similarly, the first conditional mean in the denominator is estimated by a logistic regression for the subsample with $Z_i=1$.  On the other hand, the IPW approach requires estimating a binary response model for the instrument propensity score
defined in (\ref{iv_ps}). Thus, we estimate the instrument propensity score by a logistic regression. Our proposed IPWRA method requires the same conditional mean specifications as the RA approach and additionally the specification of \eqref{iv_ps} to construct the weights. For IPWRA, we estimate the conditional means in the numerator of \eqref{lateid} by two separate \textit{weighted} linear (logistic) regressions of continuous (binary) $Y(z)$  with the weight equal to the inverse of the estimated probability of being eligible or not, using the subsample with $Z_i=1$ or $Z_i=0$, respectively. Finally, AIPW requires the same set of model specifications as our proposed method, although the regressions are not weighted.

In our Monte Carlo study, we consider estimators \emph{(i)} when the required models are all correctly specified, \emph{(ii)} when models for \eqref{mu_0}--\eqref{rho_1}  are misspecified, and \emph{(iii)} when the model for \eqref{iv_ps} is misspecified. Correct specifications for these estimators mean that we use the correct set of covariates for all the regressions, namely simulated income, age, and age squared. Misspecification of a certain model means that the set of regressors does not include age squared.

Tables \ref{tab:simresults1} and \ref{tab:simresults2}  present the biases, root mean squared errors of the
LATE estimators, and the empirical coverage rates for nominal 95\% confidence intervals under the different model specifications for the dependent variables net financial assets and IRA participation, respectively.

Table \ref{tab:simresults1} suggests that, when the relevant models are correctly specified, that is, all the confounding factors are controlled for, the bias is highest for the linear IV estimates. This coincides with the fact that the IV estimand is not equal to the LATE in the case of a conditionally independent instrument \citep[e.g.,][]{Sloczynski2021, BBMT2022}. The RA estimator has the smallest bias and RMSE when all the models are correctly specified. For the smaller sample size, our proposed estimator has the second smallest bias and RMSE\@. For the larger sample size, the bias estimates are very close for IPWRA and AIPW but the precision of our estimator is better than that of IPW and AIPW\@.

The second block of Table \ref{tab:simresults1} presents the results for the first type of misspecification. The IV estimator becomes heavily biased when we do not control for age squared. The effect of omitting this variable when estimating the conditional mean functions in \eqref{mu_0}--\eqref{rho_1} is similar for the RA estimator, which is also very biased. The IPW estimator is not affected by this type of misspecification since it only requires the correct specification of the model for the instrument propensity score. The doubly robust estimators, IPWRA and AIPW, are not seriously affected by this type of misspecification either, as predicted by theory. Our proposed method has the smallest bias and RMSE for sample size $N=1{,}000$, and a slightly larger bias -- but still the smallest RMSE -- for sample size $N=4{,}000$.

\begin{table}[!tb]
\begin{adjustwidth}{-1.25in}{-1.25in}
\centering
\begin{threeparttable}
\caption{Simulation Results for the Continuous Outcome Variable\label{tab:simresults1}}
\begin{tabular}{cccccccccccc}
\hline\hline
& \multicolumn{3}{c}{\textbf{All Correct}} & & \multicolumn{3}{c}{\textbf{\eqref{mu_0}--\eqref{rho_1} misspecified}} & & \multicolumn{3}{c}{\textbf{\eqref{iv_ps} misspecified}} \\
\cline{2-4}
\cline{6-8}
\cline{10-12}
& \multicolumn{1}{c}{\textbf{Bias}} & \multicolumn{1}{c}{\textbf{RMSE}} & \multicolumn{1}{c}{\textbf{Cov.}} & & \multicolumn{1}{c}{\textbf{Bias}} & \multicolumn{1}{c}{\textbf{RMSE}} & \multicolumn{1}{c}{\textbf{Cov.}} & & \multicolumn{1}{c}{\textbf{Bias}} & \multicolumn{1}{c}{\textbf{RMSE}} & \multicolumn{1}{c}{\textbf{Cov.}} \\
\hline
\textbf{N=1{,}000} &       &       &       &       &       &       &       &       &       &  \\
\cline{1-1}
\textbf{IV} & 271.43 & 6,163.19 & 95.3  & & --1,544.41 & 6,395.32 & 94.5  & & 271.43 & 6,163.19 & 95.3 \\
\textbf{RA} & 127.69 & 6,169.94 & 95.5  & & --1,724.20 & 6,445.16 & 94.2  & & 127.69 & 6,169.94 & 95.5 \\
\textbf{IPW} & 162.49 & 6,958.60 & 95.8  & & 162.49 & 6,958.60 & 95.8  & & --1,549.48 & 7,033.47 & 94.1 \\
\textbf{IPWRA} & 159.24 & 6,300.47 & 95.4  & & 103.68 & 6,306.11 & 95.3  & & 140.70 & 6,258.49 & 95.3 \\
\textbf{AIPW} & 195.36 & 6,418.33 & 95.6  & & 170.13 & 6,439.62 & 95.4  & & 170.75 & 6,304.15 & 95.4 \\
&       &       &       &       &       &       &       &       &       &  \\
\textbf{N=4{,}000} &       &       &       &       &       &       &       &       &       &  \\
\cline{1-1}
\textbf{IV} & 114.57 & 3,097.13 & 94.8  & & --1,734.22 & 3,565.59 & 89.7  & & 114.57 & 3,097.13 & 94.8 \\
\textbf{RA} & --45.16 & 3,119.43 & 94.4  & & --1,907.94 & 3,662.62 & 89.2  & & --45.16 & 3,119.43 & 94.4 \\
\textbf{IPW} & --60.69 & 3,381.29 & 94.4  & & --60.69 & 3,381.29 & 94.4  & & --1738.43 & 3,782.54 & 91.2 \\
\textbf{IPWRA} & --74.71 & 3,155.63 & 94.8  & & --102.96 & 3,161.44 & 94.8  & & --69.52 & 3,152.04 & 94.6 \\
\textbf{AIPW} & --74.41 & 3,174.61 & 94.8  & & --95.49 & 3,183.18 & 94.8  & & --67.71 & 3,160.11 & 94.7 \\
\hline
\end{tabular}
\begin{footnotesize}
\begin{tablenotes}[flushleft]
\item \textit{Notes:} The details of the simulation design are provided in Section \ref{simulations}. Results are based on 1,000 replications. ``RMSE'' is the root mean squared error of an estimator. ``Cov.''~is the coverage rate for a nominal 95\% confidence interval. ``IV'' is the IV estimate of the coefficient on the endogenous treatment, controlling for $X$\@.  The remaining estimators are defined in the main text. To calculate the coverage rate, we use robust standard errors (IV) or standard errors that follow from the GMM framework in Section \ref{inference} (remaining estimators).
\end{tablenotes}
\end{footnotesize}
\end{threeparttable}
\end{adjustwidth}
\end{table}

Finally, we investigate the bias and RMSE for the case where the instrument propensity score in \eqref{iv_ps} is estimated without the squared term. As expected, the only estimator that is severely affected by this misspecification is IPW\@. The doubly robust methods continue to have reasonably small biases. IPWRA has the smaller bias and RMSE for $N=1{,}000$ and the smaller RMSE for $N=4{,}000$, as in other cases. The results demonstrate that the double robustness of our proposed method is achieved without significant sacrifices in terms of precision. Coverage rates are close to the nominal coverage rate for both DR estimators in all cases. For IV, RA, and IPW, coverage rates are sometimes substantially lower than 95\% when the estimators are otherwise biased.

\begin{table}[!tb]
\begin{adjustwidth}{-1.25in}{-1.25in}
\centering
\begin{threeparttable}
\caption{Simulation Results for the Binary Outcome Variable\label{tab:simresults2}}
\begin{tabular}{cccccccccccc}
\hline\hline
& \multicolumn{3}{c}{\textbf{All Correct}} & & \multicolumn{3}{c}{\textbf{\eqref{mu_0}--\eqref{rho_1} misspecified}} & & \multicolumn{3}{c}{\textbf{\eqref{iv_ps} misspecified}} \\
\cline{2-4}
\cline{6-8}
\cline{10-12}
& \multicolumn{1}{c}{\textbf{Bias}} & \multicolumn{1}{c}{\textbf{RMSE}} & \multicolumn{1}{c}{\textbf{Cov.}} & & \multicolumn{1}{c}{\textbf{Bias}} & \multicolumn{1}{c}{\textbf{RMSE}} & \multicolumn{1}{c}{\textbf{Cov.}} & & \multicolumn{1}{c}{\textbf{Bias}} & \multicolumn{1}{c}{\textbf{RMSE}} & \multicolumn{1}{c}{\textbf{Cov.}} \\
\hline
\textbf{N=1{,}000} &       &       &       &       &       &       &       &       &       &  \\
\cline{1-1}
\textbf{IV} & --0.0005 & 0.0421 & 93.5  & & --0.0048 & 0.0423 & 93.6  & & --0.0005 & 0.0421 & 93.5 \\
\textbf{RA} & --0.0031 & 0.0408 & 94.0    & & --0.0030 & 0.0407 & 93.3  & & --0.0031 & 0.0408 & 94.0 \\
\textbf{IPW} & --0.0031 & 0.0441 & 94.0    & & --0.0072 & 0.0438 & 93.4  & & --0.0072 & 0.0438 & 93.4 \\
\textbf{IPWRA} & --0.0033 & 0.0410 & 93.5  & & --0.0034 & 0.0411 & 93.9  & & --0.0033 & 0.0409 & 93.7 \\
\textbf{AIPW} & --0.0033 & 0.0411 & 93.6  & & --0.0033 & 0.0411 & 93.8  & & --0.0032 & 0.0409 & 93.8 \\
&       &       &       &       &       &       &       &       &       &  \\
\textbf{N=4{,}000} &       &       &       &       &       &       &       &       &       &  \\
\cline{1-1}
\textbf{IV} & 0.0028 & 0.0199 & 95.0    & & --0.0017 & 0.0198 & 95.6  & & 0.0028 & 0.0199 & 95.0 \\
\textbf{RA} & --0.0001 & 0.0191 & 95.1  & & 0.0000 & 0.0192 & 95.5  & & --0.0001 & 0.0191 & 95.1 \\
\textbf{IPW} & --0.0001 & 0.0199 & 95.6  & & --0.0042 & 0.0203 & 95.1  & & --0.0042 & 0.0203 & 95.1 \\
\textbf{IPWRA} & --0.0002 & 0.0192 & 95.7  & & --0.0002 & 0.0192 & 95.6  & & --0.0002 & 0.0192 & 95.2 \\
\textbf{AIPW} & --0.0002 & 0.0192 & 95.6  & & --0.0002 & 0.0191 & 95.6  & & --0.0002 & 0.0192 & 95.3 \\
\hline
\end{tabular}
\begin{footnotesize}
\begin{tablenotes}[flushleft]
\item \textit{Notes:} The details of the simulation design are provided in Section \ref{simulations}. Results are based on 1,000 replications. ``RMSE'' is the root mean squared error of an estimator. ``Cov.''~is the coverage rate for a nominal 95\% confidence interval. ``IV'' is the IV estimate of the coefficient on the endogenous treatment, controlling for $X$\@.  The remaining estimators are defined in the main text. To calculate the coverage rate, we use robust standard errors (IV) or standard errors that follow from the GMM framework in Section \ref{inference} (remaining estimators).
\end{tablenotes}
\end{footnotesize}
\end{threeparttable}
\end{adjustwidth}
\end{table}

Table \ref{tab:simresults2} revisits the same measures of estimator performance as Table \ref{tab:simresults1} while focusing on the binary outcome. Unlike in Table \ref{tab:simresults1}, the differences between the estimators that we consider are very minor, even in cases when one of the underlying models is misspecified. This is likely due to the fact that, as shown in Table \ref{tab1} in the Appendix, age squared is insignificant in the logit regressions of the binary outcome, IRA participation. It follows that omitting this variable should have a smaller effect on estimator performance, relative to Table \ref{tab:simresults1}.

In any case, even though the differences in estimator performance, as reported in Table \ref{tab:simresults2}, are very minor, it is also clear that the performance of RA, IPWRA, and AIPW is better overall than that of IV and IPW\@. This is again reassuring, given our general preference for IPWRA estimation.

\section{Conclusion}
\label{conclusion}

In this paper we develop a framework for doubly robust (DR) estimation of local average treatment effects, which uses quasi-likelihood methods weighted by the inverse of the instrument propensity score. These estimators are commonly referred to as inverse probability weighted regression adjustment (IPWRA)\@. We argue that our estimators have appealing small sample properties relative to competing methods, such as augmented inverse probability weighting (AIPW)\@. We discuss inference for IPWRA estimators and propose a DR version of a Hausman test previously suggested by \citetalias{DHL2014b}, which compares two estimates of the average treatment effect on the treated (ATT) in settings with one-sided noncompliance.

We discuss two empirical applications. First, we revisit \cite{Abadie2003}'s study of the effects of 401(k) retirement plans, and demonstrate that some of the conclusions are different dependent on whether one uses AIPW or IPWRA, which are the two major classes of DR estimators. While we obviously do not know which estimate is closer to the true effect of interest, we note that our preferred estimate is much more precise. Second, we reanalyze \cite{TAWBF2014}'s sample from the Oregon Health Insurance Experiment. Focusing on the effect of Medicaid on emergency room visits, we provide evidence that the local average treatment effect on the treated (LATT) is larger than the usual local average treatment effect (LATE), at least along the extensive margin. We conclude the paper with a Monte Carlo study that demonstrates the very good finite sample properties of our proposed IPWRA estimator.

\pagebreak

\section*{Appendix}

\setcounter{table}{0}
\renewcommand{\thetable}{A\arabic{table}}
\setcounter{figure}{0}
\renewcommand{\thefigure}{A\arabic{figure}}

\vspace{1cm}

\begin{table}[!h]
\begin{adjustwidth}{-1.25in}{-1.25in}
\centering
\begin{threeparttable}
\caption{Coefficient Values for the Data-Generating Process in Section \ref{simulations}\label{tab1}}
\begin{tabular}{l*{9}{c}}
\hline\hline
	& 401(k) & & 401(k) & & \multicolumn{2}{c}{Net total} & & \multicolumn{2}{c}{IRA} \\
	& eligibility & & participation & & \multicolumn{2}{c}{financial assets} & & \multicolumn{2}{c}{participation} \\
\cline{2-2}
\cline{4-4}
\cline{6-7}
\cline{9-10}
	& (1) & & (2) & & (3) & (4) & & (5) & (6) \\
\hline
Household income &   0.0000232 & &   0.0000154 & &       1.134&       0.762 & &   0.0000318&   0.0000342\\
	&     (23.00) & &      (9.18) & &     (25.67)&     (23.92) & &     (18.87)&     (20.87)\\
Age (minus 25)               &      0.0581 & &     --0.0285 & &      --106.6&      --557.4 & &      0.0420&      0.0665\\
	&      (7.25) & &     (--2.03) & &     (--0.25)&     (--2.42) & &      (2.63)&      (4.99)\\
Age (minus 25) squared            &    --0.00158 & &    0.000699 & &       41.36&       38.28 & &    0.000211&   --0.000267\\
	&     (--7.45) & &      (1.88) & &      (3.68)&      (6.32) & &      (0.52)&     (--0.82)\\
Constant            &      --1.727 & &       0.387 & &    --36377.2&    --19452.5 & &      --3.148&      --3.653\\
	&    (--24.55) & &      (3.08) & &     (--9.68)&    (--10.03) & &    (--20.02)&    (--28.04)\\
 & & & & & & & & & \\
Observations        &        9,275 & &        3,637 & &        3,637&        5,638 & &        3,637&        5,638\\
Sample & Full & & $Z=1$ & & $Z=1$ & $Z=0$ & & $Z=1$ & $Z=0$\\
Method & Logit & & Logit & & OLS & OLS & & Logit & Logit \\
\hline
\end{tabular}
\begin{footnotesize}
\begin{tablenotes}[flushleft]
\item \textit{Notes:} The table presents coefficient estimates obtained using \cite{Abadie2003}'s subsample of the Survey of Income and Program Participation (SIPP) of 1991, as previously analyzed in Table \ref{tab:table2}. The instrument, $Z$, is an indicator for 401(k) eligibility. The estimates in column (1) are used as coefficient values for equation \eqref{dgp:z}. The estimates in column (2) are used as coefficient values for equation \eqref{dgp:d1}. The estimates in columns (3) and (4) are used as coefficient values for equation \eqref{dgp:conty}. The estimates in columns (5) and (6) are used as coefficient values for equation \eqref{dgp:biny}. $t$ statistics are in parentheses.
\end{tablenotes}
\end{footnotesize}
\end{threeparttable}
\end{adjustwidth}
\end{table}

\pagebreak
\singlespacing

\setlength\bibsep{0pt}
\bibliographystyle{aer}
\bibliography{SUW_references}

\end{document}